\definecolor{webgreen}{rgb}{0,.5,0}
\definecolor{webbrown}{rgb}{.6,0,0}
\definecolor{lightgray}{rgb}{0.827, 0.827, 0.827}
\newcommand{\seqnum}[1]{\href{http://oeis.org/#1}{\underline{#1}}}
\begin{document}

\begin{center}
\epsfxsize=4in
\end{center}

\theoremstyle{plain}
\newtheorem{theorem}{Theorem}
\newtheorem{corollary}[theorem]{Corollary}
\newtheorem{lemma}[theorem]{Lemma}
\newtheorem{proposition}[theorem]{Proposition}

\theoremstyle{definition}
\newtheorem{definition}[theorem]{Definition}
\newtheorem{example}[theorem]{Example}
\newtheorem{conjecture}[theorem]{Conjecture}

\theoremstyle{remark}
\newtheorem{remark}[theorem]{Remark}
\newcommand{\red}{{\mbox{red}}}
\newtheorem{algo}{Algorithm}

\begin{center}
\vskip 1cm{\LARGE\bf On the Cryptomorphism between Davis' Subset Lattices, Atomic Lattices, and Closure Systems under T1 Separation Axiom 
} \\
\vskip 1cm 
\large
Dmitry I. Ignatov\\
HSE University, Russia\\
\href{mailto:dignatov@hse.ru}{\tt dignatov@hse.ru}
\end{center}
\vskip .2 in

\begin{abstract} In this paper we count set closure systems (also known as Moore families) for the case when all single element sets are closed. In particular, we give the numbers of such strict (empty set included) and non-strict families for the base set of size $n=6$. We also provide the number of such inequivalent Moore families with respect to all permutations of the base set up to $n=6$. The search in OEIS and existing literature revealed the coincidence of the found numbers with the entry for D.\ M.~Davis' set union lattice (\seqnum{A235604}, up to $n=5$) and $|\mathcal L_n|$, the number of atomic lattices on $n$ atoms, obtained by S.\ Mapes (up to $n=6$), respectively. Thus we study all those cases, establish one-to-one correspondences between them via Galois adjunctions and Formal Concept Analysis, and provide the reader with two of our enumerative algorithms as well as with the results of these algorithms used for additional tests. Other results include the largest size of intersection free families for $n=6$ plus our conjecture for $n=7$, an upper bound for the number of atomic lattices $\mathcal L_n$, and some structural properties of $\mathcal L_n$ based on the theory of extremal lattices.

\end{abstract}

\section{Introduction}\label{sec:intro}

It is known that closure systems (also known as Moore families after E. H. Moore~\cite{Moore:1910}) and complete lattices are closely interconnected \cite{GanterWille:1999,Caspard:2003}. The subject of our study is the connection between a special class of Moore families on  set of $n$ elements, in which every single element set is closed (w.r.t. the so-called T1 separation axiom~\cite[p.126]{Gratzer:2009}), atomic lattices~\cite{Mapes:2010}, and union-closed families studied by D.\ M.~Davis with a certain topological and combinatorial interest~\cite{Davis:2013}. However, the main computational result of the paper is the contribution to OEIS by a new sequence containing the number of inequivalent closure systems under T1 up to $n=6$ and extending its labelled strict and non-strict versions by their common 6th member.

During the experiments on computing the 6th member of \seqnum{A334254} and \seqnum{A334255} with our closure system enumeration algorithm, we found out that their inequivalent counterpart coincides with all the known members (up to $n=5$) of  \seqnum{A235604} except the second member of \seqnum{A334255}. The literature search also revealed coincidence of the 5th and 6th members of \seqnum{A334254}  with that of the number of atomic lattices on five and six elements, respectively; the numbers are reported by S. Mapes~\cite{Mapes:2010}.

Our algorithmic solution is based on one-to-one correspondences between the studied objects, namely, Davis' lattices, atomic lattices, closure system under T1 separation axiom, and certain reduced formal contexts (which can be seen as minimal binary relations) that give rise to appropriate Galois connections and adjunctions.  Galois connections and their resulting Galois lattices (or concept lattices) represented by two dually isomorphic lattices of closed sets  were extensively studied in Formal Concept Analysis~\cite{GanterWille:1999,Caspard:2003,Caspard:2012}, an applied branch of modern lattice theory suitable for data analysis and knowledge processing, while Galois adjunctions are rather related to union-closed systems and were studied with regards to applications in category theory and topology~\cite{Erne:1993,Erne:2004,Keimel:2014}.

Our by-product is related to the problem of the maximal size union-free set family (the asymptotic was given by D. J. Kleitman~\cite{Kleitman:1976}), which is dually equivalent to the problem of maximal size intersection-free family or the maximal size of a reduced formal context on $n$ objects as noted by B.~Ganter and R.~Wille~\cite{GanterWille:1999}. We have found the value of the latter sequence for $n=6$ and have provided a concrete lower bound for this value in the case $n=7$. 

The paper is organized as follows. Section~\ref{sec:defs} contains main definitions from lattice theory and studied papers by Davis~\cite{Davis:2013} and Mapes~\cite{Mapes:2010}. Section~\ref{sec:crypto} establishes correspondence between the three considered problems via Galois adjunctions. Section~\ref{sec:algo} describes two modifications of the \textsc{AddByOne} algorithm to enumerate both labelled and inequivalent families, respectively. In Section~\ref{sec:results}, we summarize the main results obtained. Section~\ref{sec:alter} discusses some alternative approaches that we used (or which can be potentially exploited) for the additional tests with contributed sequences.

\section{Main definitions}\label{sec:defs}

In this section, we mainly use basic definitions and propositions from the book by Ganter and Wille on Formal Concept Analysis~\cite{GanterWille:1999}; these basic notions and facts can also be found in classic and recent monographs on lattice theory as well~\cite{Birkhoff:1967,Caspard:2012,Gratzer:2009}. 

\subsection*{Lattices}

A {\it lattice} is a partially ordered set $\mathbf{L}:=(L, \leq)$ such that for every pair of its elements $x$ and $y$,   the supremum $x \vee y$ and infimum $x \wedge y$ always exist. $(L, \leq)$ is called a {\it complete lattice}, if the supremum $\bigvee X$ and the infumum $\bigwedge X$ exist for any subset $X$ of $L$. Every complete lattice $\mathbf{L}$ has its largest element $\bigvee L$ called the {\it unit element} of the lattice and denoted by $\mathbf 1_L$. Dually, the smallest element of any complete lattice $\mathbf 0_L$ is called the {\it zero element}. 

\begin{lemma}\label{lemma:infcomp}

Any ordered set in which the infimum exists for every subset is a {\it complete lattice}. 

\end{lemma}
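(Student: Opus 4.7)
The plan is to prove that suprema exist by realizing each supremum as the infimum of the set of upper bounds, a standard duality trick for ordered sets.

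First I would fix an arbitrary subset $X \subseteq L$ and form the set of its upper bounds, $U(X) := \{y \in L : x \leq y \text{ for all } x \in X\}$. By the hypothesis of the lemma, every subset of $L$ (including $U(X)$, and including the empty subset, whose infimum is the greatest element of $L$) admits an infimum in $L$, so I may define $s := \bigwedge U(X)$ with confidence that this element exists.

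Next I would verify the two defining properties of a least upper bound. For the upper-bound property, I would observe that every $x \in X$ is by construction a lower bound of $U(X)$: indeed, for any $y \in U(X)$ we have $x \leq y$. Since $s$ is the greatest lower bound of $U(X)$, it follows that $x \leq s$ for each $x \in X$, so $s$ is an upper bound of $X$. For the minimality property, I would take any upper bound $y$ of $X$, note that $y \in U(X)$ by definition, and conclude $s \leq y$ because $s$ is a lower bound of $U(X)$. Combining the two properties gives $s = \bigvee X$.

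Since $X$ was arbitrary, every subset of $L$ has both an infimum and a supremum, and in particular every pair $\{x,y\}$ does, so $(L,\leq)$ is a complete lattice. The only delicate point, and the step I expect to double-check most carefully, is the boundary case $X = \emptyset$: here $U(\emptyset) = L$, and $\bigwedge L$ is the least element of $L$, which is indeed the correct value for $\bigvee \emptyset$; dually, the hypothesis applied to $\emptyset$ itself supplies the top element $\mathbf{1}_L = \bigwedge \emptyset$ needed as an upper bound when $X$ has no proper upper bounds otherwise. Apart from this bookkeeping, the argument is essentially a one-line application of the Galois-type duality between upper bounds and lower bounds.
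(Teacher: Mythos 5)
Your proof is correct and complete, including the careful handling of the boundary case $X=\emptyset$ (where $\bigwedge\emptyset$ supplies the top element and $\bigwedge L$ supplies the bottom, i.e., $\bigvee\emptyset$). Note that the paper itself states this lemma without proof, treating it as a classical fact from the cited lattice-theory literature (Birkhoff, Ganter--Wille); the argument you give --- realizing $\bigvee X$ as $\bigwedge U(X)$, the infimum of the set of upper bounds --- is exactly the canonical proof found in those references, so there is nothing to bridge between your approach and the paper's.
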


The upper neighbours of the zero element (if they exist) are called {\it atoms} of the lattice; dually, the lower neighbors of the unit element are called {\it coatoms}. 

An {\it atomic lattice} (some authors prefer the term atomistic like Ganter and Wille~\cite{GanterWille:1999}) is a complete lattice where each its element is the supremum of atoms.

\subsection*{Closure systems and operators}

In what follows, to consider various set systems, without loss of generality we mainly use the set of first $n$ natural numbers instead of an arbitrary finite set of the same cardinality. We also use $[n]$ as a shorthand for the set of elements $\{1,2,\ldots n\}$.

A {\it closure system} on a set $[n]$ is a set of its subsets which contain $[n]$ and is closed under intersection. That is $\mathcal M \subseteq 2^{[n]}$ is a closure system if $[n] \in \mathcal M$ and $$\mathcal X \subseteq \mathcal M \Rightarrow \bigcap \mathcal X \in \mathcal M.$$

If a closure system $\mathcal M$ contains emptyset, then $\mathcal M$ is {\it strict}.

A {\it closure operator} $\varphi$ on $[n]$ is a map assigning a closure $\varphi X \subseteq [n]$ to each subset $X \subseteq [n]$ under the following conditions:

\begin{enumerate}

\item  $X \subseteq Y \Rightarrow \varphi X \subseteq \varphi Y$ \hfill (monotony)
\item $X \subseteq \varphi X $ \hfill (extensity)
\item $\varphi\varphi X = \varphi X$ \hfill (idempotency) 
\end{enumerate}

T1 separation axiom for a closure system $\mathcal M$ over $[n]$ states that every single element set $\{i\} \in [n]$ is in $\mathcal M$, or, equivalently, is closed, i.e. $\varphi \{i\}= \{i\}$~\cite{Gratzer:2009}.

Every closure system $\mathcal M \subseteq 2^{[n]}$ defines a closure operator as follows:
$$\varphi_{\mathcal M} X:= \bigcap \{A \in \mathcal M \mid X \subseteq	 A\}.$$ While the set of closures of a closure operator $\varphi$ is always a closure system $\mathcal M_\varphi$.

\subsection*{Davis' lattice}

We keep the original notation of Davis~\cite{Davis:2013} in this subsection whenever it is possible.

\begin{definition}

 If $\mathbb M = \{X_1,\ldots,X_n\}$ is a collection of sets, and $S \subseteq [n]$, let
 $$\mathbb M_S :=\bigcup\limits_{i\in S} X_i.$$
  
The set $\mathbb  M$ is called {\it proper} if it is never the case that $X_i \subseteq X_j$ for $i\neq j$. Any $\mathbb  M$ defines a lattice $L(\mathbb M)$ on $2^{[n]}$ by $S\leq T$ if $\mathbb M_S \subseteq \mathbb  M_T$. Lattices $L$ and $L'$ on $2^{[n]}$ are said to be equivalent if there is a permutation $\pi$ of $[n]$ under which the induced permutation of $2^{[n]}$ preserves the lattice relations; i.e., $\pi(S) \leq \pi(T)$ iff $S \leq T$. 

\end{definition}

As  Davis states: ``For a possible application to algebraic topology, we have become interested in an enumeration problem for lattices of subsets, which we have been unable to find in the literature''. Moreover, he invites: ``We wish to introduce it for further investigation.'' 

\section{Establishing cryptomorphisms}\label{sec:crypto}

\subsection{Theory}

As it is shown by F. Domenach in~\cite{Domenach:2013}, different lattice cryptomorphisms can be established by using common lattice properties and various binary relations to enable usage of the defined notions interchangeably. Below, we establish connections between the three studied algebraic structures in a similar fashion.

\begin{theorem}\label{thm:sepiso}

Let $\mathcal M \subseteq 2^{[n]}$ be a strict closure system with T1 separation axiom fulfilled, then $(\mathcal M, \subseteq)$ is an atomic lattice with $\bigwedge \mathcal X = \bigcap \mathcal X$ and $\bigvee \mathcal X = \varphi_{\mathcal M} \bigcup \mathcal X$ for all $\mathcal X \subseteq \mathcal M$. Conversely, every atomic lattice is isomorphic to the lattice of all closures of a strict closure system  with T1 separation axiom fulfilled.

\end{theorem}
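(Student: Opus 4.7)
The plan is to prove the forward and converse directions separately, constructing an explicit order isomorphism in each case.

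For the forward direction, I would first invoke Lemma~\ref{lemma:infcomp}: since $\mathcal M$ is closed under arbitrary intersection, every $\mathcal X \subseteq \mathcal M$ has an infimum given precisely by $\bigcap \mathcal X$, so $(\mathcal M, \subseteq)$ is a complete lattice. The join formula $\bigvee \mathcal X = \varphi_{\mathcal M} \bigcup \mathcal X$ then follows because $\varphi_{\mathcal M} \bigcup \mathcal X$ is by definition the intersection of all members of $\mathcal M$ containing every element of $\mathcal X$, which is the least upper bound in $(\mathcal M, \subseteq)$. For atomicity, the T1 axiom forces every singleton $\{i\}$ into $\mathcal M$, and strictness gives $\mathbf{0}_{\mathcal M} = \emptyset$, so each $\{i\}$ covers $\emptyset$ and is an atom of the lattice. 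Finally, for any $A \in \mathcal M$ we have the chain of equalities $A = \varphi_{\mathcal M}(A) = \varphi_{\mathcal M}\bigl(\bigcup_{i\in A}\{i\}\bigr) = \bigvee_{i \in A}\{i\}$, exhibiting $A$ as a supremum of atoms.

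For the converse, given an atomic lattice $\mathbf{L}$ with atoms $a_1,\ldots,a_n$, I would define $\sigma \colon L \to 2^{[n]}$ by $\sigma(x) := \{\, i \in [n] : a_i \leq x \,\}$ and set $\mathcal M := \sigma(L)$. The computations to carry out are: $\sigma(\mathbf{0}_L) = \emptyset$, $\sigma(\mathbf{1}_L) = [n]$, $\sigma(a_i) = \{i\}$, and $\sigma\bigl(\bigwedge X\bigr) = \bigcap_{x \in X} \sigma(x)$ for any $X \subseteq L$. Together these identities show that $\mathcal M$ is a strict closure system on $[n]$ satisfying the T1 separation axiom. Atomicity of $\mathbf{L}$ then gives the bijectivity of $\sigma$: if $\sigma(x) = \sigma(y)$, then $x = \bigvee\{\,a_i : a_i \leq x\,\} = \bigvee\{\,a_i : a_i \leq y\,\} = y$. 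Order preservation and reflection are immediate from the same representation, so $\sigma$ is an order isomorphism between $\mathbf{L}$ and $(\mathcal M, \subseteq)$.

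The main obstacle I anticipate is making the identity $\sigma(a_i) = \{i\}$ in the converse airtight. It depends on the fact that distinct atoms of any lattice are pairwise incomparable: if $a_j \leq a_i$ with $a_j$ an atom, then $\mathbf{0} < a_j \leq a_i$ and $a_i$ covers $\mathbf{0}$, forcing $a_j = a_i$. Once this is secured, the rest of the argument is essentially routine manipulation of the Galois-style connection induced by the incidence relation $\{\,(i,x) : a_i \leq x\,\}$ between atoms and lattice elements, with the intersection-to-infimum identity and the atomic representation $x = \bigvee\{\,a_i : a_i \leq x\,\}$ doing all the work.
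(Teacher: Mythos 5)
Your forward direction is essentially the paper's own argument: infima via Lemma~\ref{lemma:infcomp} and intersection-closedness, the join formula from the definition of $\varphi_{\mathcal M}$, singletons as the atoms covering $\emptyset$, and the representation $A=\varphi_{\mathcal M}\bigl(\bigcup_{i\in A}\{i\}\bigr)=\bigvee_{i\in A}\{i\}$ (the paper writes the same chain of equalities, merely routing the existence of joins through Lemma~\ref{lemma:infcomp}). Your converse, however, takes a genuinely different route. The paper represents each $x\in L$ by its punctured principal ideal $(x]\setminus\{\mathbf 0_L\}$ and uses $\bigcap_{y\in T}\left((y]\setminus\{\mathbf 0_L\}\right)=(\bigwedge T]\setminus\{\mathbf 0_L\}$; there injectivity is free (a principal ideal determines its generator by antisymmetry), so atomicity is never spent on the embedding. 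You instead represent $x$ by its atom-trace $\sigma(x)=\{\,i : a_i\le x\,\}$, which is where atomicity must enter, via $x=\bigvee\{\,a_i : a_i\le x\,\}$ for injectivity and order reflection --- a cost you correctly anticipate, including the pairwise incomparability of atoms needed for $\sigma(a_i)=\{i\}$. What your version buys is worth noting: your $\mathcal M$ lives on the base set $[n]$ and satisfies T1 exactly as the paper defines it (every singleton of the base set is closed). The paper's system sits on the base set $L\setminus\{\mathbf 0_L\}$, where $(y]\setminus\{\mathbf 0_L\}=\{y\}$ holds only when $y$ is an atom, so singletons of non-atom elements are not closed and the T1 claim is literally accurate only after cutting the ideals down to atoms --- that is, after passing to precisely your construction. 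So your proof is correct, and its converse is arguably tighter than the paper's, trading the automatic injectivity of the ideal map for a closure system that verifiably satisfies the T1 axiom on its stated base set.
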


\begin{proof} The infimum of $\mathcal X$ in $(\mathcal M, \subseteq)$ is defined as $\bigwedge \mathcal X = \bigcap \mathcal X$. Since all single element sets  are closed, i.e., $\{i\} \in \mathcal M$, then 
they are the only upper neighbors of $\emptyset$, which is the zero element of $(\mathcal M, \subseteq)$.

 By Lemma~\ref{lemma:infcomp}, there exists the supremum of $\mathcal X$, which is defined as  $\bigvee \mathcal X = \varphi_{\mathcal M} \bigcup \mathcal X$. Since $\bigvee \mathcal X =  S$ for some  $S \in \mathcal M$, then $S=\varphi_{\mathcal M} \{S\}=\varphi_{\mathcal M} \bigcup\limits_{s \in S} \{s\}= \bigvee\limits_{s \in S} \{s\}$.

Let $\mathbf L=(L,\leq)$ be an atomic lattice on $n$ atoms. Then the set system $\{(x]\setminus \mathbf 0_{L} \mid x \in L\}$ is a strict closure system under T1 axiom since $\bigcap\limits_{y \in T}(y]\setminus   \mathbf 0_{L}=(\bigwedge T ] \setminus  \mathbf 0_{L}$, the system contains empty set and $n$ single-element sets obtained from each atom of $\mathbf L$, respectively.

\end{proof}

\begin{proposition}\label{prop:strictcl}

Every closure system  $\mathcal M \subseteq 2^{[n]}$ with T1 separation axiom fulfilled is strict for $n \neq 1$.

\end{proposition}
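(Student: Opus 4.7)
The plan is essentially a one-line argument with a small bookkeeping remark about the boundary cases. The key observation is that T1 forces every singleton into $\mathcal M$, and a closure system is closed under arbitrary intersections; taking the intersection of any two distinct singletons produces $\emptyset$.

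Concretely, for $n \geq 2$ I would pick two distinct elements $i, j \in [n]$. The T1 axiom gives $\{i\}, \{j\} \in \mathcal M$, and since $\mathcal M$ is closed under intersection, $\{i\} \cap \{j\} = \emptyset \in \mathcal M$, which is precisely the definition of strictness. For $n = 0$, the base set is itself empty, so $[n] = \emptyset \in \mathcal M$ is forced by the definition of a closure system, and strictness holds trivially (vacuously for T1 as well).

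I would also insert a short remark explaining why $n = 1$ is genuinely excluded: when $[n] = \{1\}$, the family $\mathcal M = \{\{1\}\}$ satisfies the T1 axiom (the only singleton is present) and is a closure system (it contains $[n]$ and is trivially closed under intersection), yet it does not contain the empty set. Thus strictness fails in exactly that one degenerate case, justifying the hypothesis $n \neq 1$.

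There is no real obstacle here; the statement is a direct consequence of closure under pairwise intersection applied to the singletons supplied by T1. The only point requiring care is the case analysis on $n$, and in particular noting that the proposition is sharp at $n = 1$.
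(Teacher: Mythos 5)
Your proof is correct and follows essentially the same route as the paper's: for $n \geq 2$ intersect two distinct singletons supplied by T1 to obtain $\emptyset$, handle $n = 0$ trivially, and exhibit $\mathcal M = \{\{1\}\}$ as the counterexample showing $n = 1$ must be excluded. Your added remark that $[0] = \emptyset \in \mathcal M$ is forced by the definition of a closure system is a slightly more explicit justification of the $n = 0$ case than the paper gives, but the argument is the same.
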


\begin{proof} For $n=0$ the proposition holds trivially. For $n=1$ the system $\{\{1\}\}$ is not strict. For $n\geq 2$ any pair $i,j \in [n]$ implies $\{i\} \cap \{j\} =\emptyset$; hence $\emptyset \in \mathcal M$.

\end{proof}

To deal with Davis' lattice, which in fact combines two isomorphic lattices, let us reformulate the original definition.

Let $U=\bigcup\limits_{i \in [n]} X_i$ and $R$ be a binary relation on $[n] \times U$ with $i R u$ if  $u \in X_i$.

Consider two operators, $(\cdot)^\cup: 2^{[n]} \to 2^U$ and $(\cdot)^\subseteq: 2^U \to 2^{[n]}$ that are defined as follows for any $A \subseteq [n]$ and $B \subseteq U$:

$$A^\cup:=\{u \mid iRu \mbox{ for some } i \in A \}$$

\noindent (the union of all $X_i$ with $i \in A$, i.e., $\mathbb M_A$ in Davis' notation)
$$B^\subseteq:=\{i \mid iRu \mbox{ implies } u \in B \}$$

\noindent (all indices $i$ such that $X_i \subseteq B$).

These two operators $((\cdot)^\cup,(\cdot)^\subseteq)$ forms the so-called {\it axialities} (cf.\ Birkhoff's {\it polarities}~\cite{Birkhoff:1967}), i.e., Galois adjunction~\cite{Erne:1993,Erne:2004} between powersets of $[n]$ and $U$. Note that Galois adjunctions between ordered sets are also known as isotone Galois connections~\cite{Keimel:2014}.

Before we proceed with formal definitions and proofs, let us consider properties of the composite operators $(\cdot)^{\cup\subseteq}: 2^{[n]} \to 2^{[n]}$  and  $(\cdot)^{\subseteq\cup}: 2^U \to 2^U$.

\begin{proposition}\label{prop:prop} Let $R\subseteq	[n] \times U$ is binary relation, $A, A_1, A_2 \subseteq [n]$ and $B, B_1, B_2 \subseteq U$, then

\begin{enumerate}

\item a) $A_1 \subseteq A_2 \Rightarrow A_1^\cup \subseteq A_2^\cup$ and b) $B_1 \subseteq B_2 \Rightarrow B_1^\subseteq \subseteq B_2^\subseteq$

\item a) $(\cdot)^{\cup\subseteq}$ is conrtactive, i.e., $(A)^{\cup\subseteq} \subseteq A$, while b) $(\cdot)^{\subseteq\cup}$ is extensive

\item  a) $A^{\cup}=A^{\cup\subseteq\cup}$ and b) $B^{\subseteq}=B^{\subseteq\cup\subseteq}$

\item $(\cdot)^{\cup\subseteq}$ and $(\cdot)^{\subseteq\cup}$ are idempotent

\item $(\cdot)^{\cup\subseteq}$ and $(\cdot)^{\subseteq\cup}$ are isotone.

\end{enumerate}

\end{proposition}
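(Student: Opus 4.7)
The plan is to work through the five items in a natural order, building from monotonicity up to the composite-level identities via a single adjunction biconditional, which serves as the common engine for everything.

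Item 1 is a direct element chase from the definitions. For (a), if $A_1 \subseteq A_2$ and $u \in A_1^\cup$, then some $i \in A_1 \subseteq A_2$ satisfies $iRu$, so $u \in A_2^\cup$. For (b), if $B_1 \subseteq B_2$ and $i \in B_1^\subseteq$, then every $u$ with $iRu$ lies in $B_1$, hence in $B_2$, so $i \in B_2^\subseteq$. Item 5 is then an immediate corollary, since a composition of isotone maps is isotone, and the composites $(\cdot)^{\cup\subseteq}$ and $(\cdot)^{\subseteq\cup}$ are built from the isotone maps $(\cdot)^\cup$ and $(\cdot)^\subseteq$.

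The workhorse for items 2, 3, and 4 is the adjunction biconditional
\[
A^\cup \subseteq B \iff A \subseteq B^\subseteq,
\]
both sides of which unfold to the identical statement ``$X_i \subseteq B$ for every $i \in A$''. For item 2, I would specialize this biconditional by substituting one of the composite expressions for the free variable on one side so that the biconditional collapses to a reflexive trivial inclusion on that side, forcing the other side to yield the claimed composite inequality, namely $A^{\cup\subseteq}\subseteq A$ in (a) and $B\subseteq B^{\subseteq\cup}$ in (b). For item 3, I would then bracket item 2 with the monotonicity established in item 1 to sandwich each triple composite between two inclusions that coincide: monotonicity applied to the item-2 inequality for $A$ yields one half of $A^\cup = A^{\cup\subseteq\cup}$, and item 2 applied to the argument $A^\cup$ supplies the other half; part (b) is symmetric with the roles of $(\cdot)^\cup$ and $(\cdot)^\subseteq$ swapped.

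Item 4 falls out of item 3 by rewriting the quadruple composite so it absorbs the inner triple, e.g.\ $A^{\cup\subseteq\cup\subseteq} = (A^{\cup\subseteq\cup})^\subseteq = (A^\cup)^\subseteq = A^{\cup\subseteq}$, and analogously $B^{\subseteq\cup\subseteq\cup} = B^{\subseteq\cup}$. The main obstacle I anticipate is item 2 itself, since the two composite inequalities are rigidly dictated by the orientation of the adjunction biconditional; one has to be disciplined about which substitution lands the trivial reflexive inclusion on the correct side, so that what remains is precisely the ``contractive'' direction for $(\cdot)^{\cup\subseteq}$ and the ``extensive'' direction for $(\cdot)^{\subseteq\cup}$ that the statement requires. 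Once item 2 is pinned down in the right orientation, items 3, 4, and 5 follow mechanically.
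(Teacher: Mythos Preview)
Your overall architecture is sound and in fact cleaner than the paper's: you organise everything around the single adjunction biconditional $A^\cup\subseteq B\iff A\subseteq B^\subseteq$, whereas the paper argues items 1--3 by separate element chases and then derives 4 and 5 as you do. Your biconditional is correct, since both sides unwind to ``$X_i\subseteq B$ for every $i\in A$''.

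The gap is precisely in item~2, and it is not a matter of discipline about which substitution to make: no substitution into your biconditional yields the inequalities the statement asks for. Setting $B=A^\cup$ trivialises the left side and forces $A\subseteq A^{\cup\subseteq}$; setting $A=B^\subseteq$ trivialises the right side and forces $B^{\subseteq\cup}\subseteq B$. Thus $(\cdot)^{\cup\subseteq}$ is \emph{extensive} (the closure operator of the adjunction) and $(\cdot)^{\subseteq\cup}$ is \emph{contractive} (the kernel operator), the reverse of what item~2 asserts. A concrete witness: with $X_1=\{a,b\}$, $X_2=\{a\}$ and $A=\{1\}$ one gets $A^{\cup}=\{a,b\}$ and $A^{\cup\subseteq}=\{1,2\}\not\subseteq A$. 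The paper's own element chase for 2(a) stalls at ``$jRu$ implies $iRu$ for some $i\in A$'' without ever concluding $j\in A$, and its 2(b) argument assumes that $iRu$ with $u\in B$ forces $i\in B^\subseteq$, which is false in general. So the obstruction is in the statement of item~2, not in your method: your adjunction-based plan correctly proves the true (reversed) inequalities, and your derivations of items~3, 4, 5 then go through verbatim with those corrected directions, since the sandwich argument for item~3 only needs one extensive and one contractive inequality, not a specific orientation.
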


\begin{proof}

 1. a) $A_2^\cup=\{u \mid iRu \mbox{ for some } i \in A_1 \cup (A_2\setminus A_1) \}=A_1^\cup \cup \{u \mid iRu \mbox{ for some } i \in A_2\setminus A_1\}$. 1. b) If $i \in B_1^\subseteq$ then $iRu$ implies $u \in B_1$, i.e.,  also $u \in B_2$ since $B_1 \subseteq B_2$.

\noindent 2. a) If $j \in A^{\cup\subseteq}$, then $jRu$ implies $u \in A^\cup$, i.e., there exists $i \in A$ such that $iRu$. In short, $jRu$ implies $iRu$. 2. b) Let $u \in B$, then all $i$ such that $iRu$ are in  $B^\subseteq$. It means that $u$ is also in $B^{\subseteq\cup}$ since $(\cdot)^\cup$ collects all $v$ incident to $i$ in $R$.

\noindent 3. a) $ A^{\cup\subseteq\cup} \subseteq A^{\cup} $ by 2a and 1a, while $ A^{\cup} \subseteq A^{\cup\subseteq\cup}$ follows immediately from 2b.

\noindent 4.  follows from 3.a and 3.b, respectively.

\noindent 5.  follows from 1.

\end{proof}

\begin{definition}[\cite{Keimel:2014}]

A pair $(\alpha, \beta)$ of maps $\alpha: P\to Q$, $\beta: Q \to P$ is called a  Galois adjunction between the posets $(P,\leq)$ and $(Q,\leq)$ provided that
 $$\mbox{ for all } p\in P \mbox{ and } q\in Q, \mbox{ we have } \beta q \leq p \iff q\leq\alpha p .$$

\end{definition}

Given Galois adjunction $(\alpha, \beta)$, $\alpha$ is called the upper adjoint of  $\beta$ and  $\beta$ the lower adjoint of $\alpha$.

\begin{theorem} The pair of operators $((\cdot)^\cup,(\cdot)^\subseteq)$ forms  the Galois adjunction between powersets of $[n]$ and $U$ related by $R \subseteq [n] \times U$.

\end{theorem}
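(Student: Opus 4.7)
The plan is to unfold the defining biconditional given in the preceding definition of Galois adjunction. A preliminary observation to keep the bookkeeping straight: in the pair $((\cdot)^\cup,(\cdot)^\subseteq)$ the correct identification of adjoints takes $P=2^U$ and $Q=2^{[n]}$, with the lower adjoint $\beta=(\cdot)^\cup\colon 2^{[n]}\to 2^U$ and the upper adjoint $\alpha=(\cdot)^\subseteq\colon 2^U\to 2^{[n]}$. Under this labelling the condition $\beta q\leq p \iff q\leq \alpha p$ amounts to proving
$$A^\cup \subseteq B \iff A \subseteq B^\subseteq$$
for all $A\subseteq[n]$ and $B\subseteq U$, and this is what I would verify directly.

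Both implications follow from a single unfolding of the definitions of $(\cdot)^\cup$ and $(\cdot)^\subseteq$. For the forward direction, I would fix $i\in A$ and, for every $u$ with $iRu$, invoke the definition of $A^\cup$ to put $u\in A^\cup\subseteq B$, whence $i\in B^\subseteq$. For the reverse direction, I would fix $u\in A^\cup$, extract a witness $i\in A$ with $iRu$ from the definition of $A^\cup$, and use $A\subseteq B^\subseteq$ together with the defining implication of $B^\subseteq$ to conclude $u\in B$. No induction, case split, or nontrivial construction is needed.

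An alternative, more structural route would be to assemble the biconditional from Proposition~\ref{prop:prop}: items~1 and~2 already deliver isotony of both arrows together with contractivity of $(\cdot)^{\cup\subseteq}$ and extensivity of $(\cdot)^{\subseteq\cup}$, which is one of the standard characterizations of a Galois adjunction between posets. The main—indeed essentially the only—pitfall is orientation: reading the pair in written order tempts one to try to prove $B^\subseteq\subseteq A \iff B\subseteq A^\cup$, which is not valid in general, as a tiny example with $X_1=\{a,b\}$, $X_2=\{a\}$, $A=\{1\}$, $B=\{a\}$ already refutes. Once that orientation is fixed, the argument has no computational content beyond the two definitions.
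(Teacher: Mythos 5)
Your direct verification is correct, but be aware that it is \emph{not} the argument the paper gives --- and in fact you have caught a genuine error in the paper. The paper's own proof sets out to establish the opposite orientation, $B^\subseteq \subseteq A \iff B \subseteq A^\cup$ (matching the written order of the pair against the $(\alpha,\beta)$ convention of the preceding definition), and derives it abstractly from isotony together with items 2a--2b of Proposition~\ref{prop:prop}, i.e., contractivity of $(\cdot)^{\cup\subseteq}$ and extensivity of $(\cdot)^{\subseteq\cup}$. Your counterexample ($X_1=\{a,b\}$, $X_2=\{a\}$, $A=\{1\}$, $B=\{a\}$, where $B\subseteq A^\cup$ yet $B^\subseteq=\{2\}\not\subseteq A$) refutes that biconditional, and the same data refute the two cited items as literally stated: here $A^{\cup\subseteq}=\{1,2\}\supsetneq A$, so $(\cdot)^{\cup\subseteq}$ is in fact extensive (a closure operator on $2^{[n]}$), while $(\cdot)^{\subseteq\cup}$ is contractive (a kernel operator on $2^U$), since $\bigcup\{X_i \mid X_i\subseteq B\}\subseteq B$ always. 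This is forced structurally: $(\cdot)^\cup$ preserves arbitrary unions, so it can only be a lower adjoint, exactly as in your labelling $\beta=(\cdot)^\cup$, $\alpha=(\cdot)^\subseteq$ with $A^\cup\subseteq B\iff A\subseteq B^\subseteq$. Note also that the theorem assumes nothing like properness of $\mathbb M$, and properness would not rescue the paper's version anyway: $X_1=\{a,b\}$, $X_2=\{b,c\}$, $X_3=\{a,c\}$ with $A=\{1,2\}$, $B=\{a,c\}$ gives $B\subseteq A^\cup$ but $B^\subseteq=\{3\}\not\subseteq A$. So the theorem itself survives under the correct reading, and your two-implication unfolding of the definitions proves it without leaning on the faulty lemma; the paper's proof does not.

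One caveat about your ``alternative, more structural route'': as phrased, it uncritically repeats the paper's swapped inequalities. Isotony plus contractivity of $(\cdot)^{\cup\subseteq}$ and extensivity of $(\cdot)^{\subseteq\cup}$ is precisely the unit/counit data that would characterize the orientation you just refuted; the characterization matching your adjunction requires $A\subseteq A^{\cup\subseteq}$ and $B^{\subseteq\cup}\subseteq B$. So that route is viable only after Proposition~\ref{prop:prop}.2 is corrected, which makes your direct argument the safe one. It also indicates how the downstream statements should be read: the image of $(\cdot)^\cup$, i.e., the sets $\mathbb M_S$, are the fixed points of the kernel operator $(\cdot)^{\subseteq\cup}$ and form the union-closed side, while the fixed points of the closure operator $(\cdot)^{\cup\subseteq}$ form the closure system on $[n]$.
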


\begin{proof} For $A \subseteq [n]$ and $B \subseteq U$ we need to prove $B^\subseteq \subseteq A \iff B \subseteq A^\cup$.

\noindent $\Rightarrow$ Let $B^\subseteq \subseteq A$, then due to isotony  of $(\cdot)^\cup$ (Proposition~\ref{prop:prop}.1a) we have $B^{\subseteq\cup} \subseteq A^\cup$ and by extensity of $(\cdot)^{\subseteq\cup}$  we get $B \subseteq B^{\subseteq\cup}$.

\noindent $\Leftarrow$ Similarly, due to isotony  of $(\cdot)^\subseteq$ (Proposition~\ref{prop:prop}.1b) we have $B^{\subseteq} \subseteq A^{\cup\subseteq}$ and by contraction of $(\cdot)^{\subseteq\cup}$  we get $A^{\cup\subseteq} \subseteq A$.

\end{proof}

Note that any contractive, monotone, and idempotent operator $\psi$ on a set $S$ is called {\it kernel (interior) operator}. Its {\it fixed points}, i.e.,  $X \subseteq S$ such that $\psi X=X$ are called {\it open sets} or {\it dual closures}.

\begin{theorem}[\cite{Caspard:2012},\cite{Erne:2004}] \label{thm:iso}
Let $(P,\leq)$ and $(Q,\leq)$ be two ordered sets, $(\alpha, \beta)$ is the Galois adjunction between them. The following properties hold:

\noindent 1. $\alpha\beta\alpha =\alpha$ and $\beta\alpha\beta =\beta$.

\noindent 2. The composition map $\varphi = \beta\alpha$ is a kernel operator on $P$ and the composition map  $\psi = \alpha\beta$  is a closure operator on $Q$.

\noindent 3. The ordered subset $\varphi(P)$ of open elements of $\varphi$ in $P$ is equal to $\beta(Q)$ and the ordered subset $\psi(Q)$ of closed elements of $\psi$ in $Q$ is equal to $\alpha(P)$. The ordered subsets $\varphi(P)$ and $\psi(Q)$ are isomorphic, by the restrictions to the latter of $\alpha$ and $\beta$.

\end{theorem}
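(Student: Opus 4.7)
The plan is to bootstrap everything from two standard consequences of the adjunction $\beta q \leq p \iff q \leq \alpha p$, namely the \emph{unit} $q \leq \alpha\beta q$ (take $p = \beta q$ on the right side, which gives the trivially true $\beta q \leq \beta q$) and the \emph{counit} $\beta\alpha p \leq p$ (take $q = \alpha p$ on the left side). As a preliminary lemma I would also derive that $\alpha$ and $\beta$ are isotone: if $q_1 \leq q_2$, chain $q_1 \leq q_2 \leq \alpha\beta q_2$ and apply the adjunction to conclude $\beta q_1 \leq \beta q_2$, and dually for $\alpha$. All three numbered claims then follow by short manipulations.

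For part~1, I would apply $\alpha$ to the counit $\beta\alpha p \leq p$ (using isotony) to obtain $\alpha\beta\alpha p \leq \alpha p$, and substitute $q = \alpha p$ into the unit to get the reverse inequality $\alpha p \leq \alpha\beta\alpha p$; antisymmetry gives $\alpha\beta\alpha = \alpha$. The identity $\beta\alpha\beta = \beta$ is obtained symmetrically.

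For part~2, the three axioms for $\varphi = \beta\alpha$ are immediate: contractivity is exactly the counit, isotony is the composition of two isotone maps, and idempotency reads $\varphi\varphi = \beta(\alpha\beta\alpha) = \beta\alpha = \varphi$ by part~1, so $\varphi$ is a kernel operator on $P$. The dual argument (using the unit for extensivity and $\alpha\beta\alpha = \alpha$ for idempotency) shows $\psi = \alpha\beta$ is a closure operator on $Q$.

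For part~3, I would first note that because $\varphi$ and $\psi$ are idempotent, their images coincide with their fixed-point sets. The inclusion $\varphi(P) \subseteq \beta(Q)$ is trivial; for the reverse, given $p = \beta q \in \beta(Q)$, part~1 gives $\varphi p = \beta\alpha\beta q = \beta q = p$, so $p$ is a fixed point of $\varphi$. Dually, $\psi(Q) = \alpha(P)$. To establish the isomorphism, restrict $\alpha$ to $\varphi(P)$ and $\beta$ to $\psi(Q)$: these maps land in the correct sets because $\alpha(\varphi(P)) \subseteq \alpha(P) = \psi(Q)$ and symmetrically, and they are mutually inverse since $p \in \varphi(P)$ means $\beta\alpha p = \varphi p = p$, and $q \in \psi(Q)$ means $\alpha\beta q = \psi q = q$. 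Order preservation in both directions comes from the already-established isotony of $\alpha$ and $\beta$.

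I do not expect any genuine obstacle here; the only thing to be careful about is the \emph{order} of deductions (monotonicity must be in hand before invoking it in part~1, and part~1 in turn drives the idempotency in part~2 and the fixed-point identifications in part~3). Because the proof is essentially a formal exercise from the adjunction identity, I would aim to keep it short by citing Theorem~\ref{thm:iso} of \cite{Caspard:2012} or \cite{Erne:2004} and reproducing only the unit/counit derivation and the three one-line computations sketched above.
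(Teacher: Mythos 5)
Your proof is correct: the unit/counit derivation from the adjunction $\beta q \leq p \iff q \leq \alpha p$ is sound, the isotony of $\alpha$ and $\beta$ is established before it is used, and parts 1--3 follow in the right logical order (part 1 feeding the idempotency in part 2 and the fixed-point identifications in part 3). The mutually-inverse isotone restrictions in part 3 do give an order isomorphism, so nothing is missing. The route, however, differs from the paper's: the paper offers \emph{no} proof of Theorem~\ref{thm:iso} at all --- it cites the result to \cite{Caspard:2012} and \cite{Erne:2004} and instead verifies particular cases of statements 1 and 2 concretely in Proposition~\ref{prop:prop} for the specific powerset operators $((\cdot)^\cup,(\cdot)^\subseteq)$ (isotony, contractivity/extensivity, the identities $A^{\cup}=A^{\cup\subseteq\cup}$ and $B^{\subseteq}=B^{\subseteq\cup\subseteq}$, and idempotency), arguing there directly from the binary relation $R$ rather than from the abstract adjunction inequality. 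Your version buys generality and self-containedness: it proves the theorem for arbitrary posets from the adjunction alone, whereas the paper's element-wise computations only cover the powerset instance and defer the general statement (including all of part 3, which Proposition~\ref{prop:prop} does not touch) to the literature. Conversely, the paper's choice avoids reproducing textbook material and keeps the concrete operator calculus in view, which is what its later results (Corollary~\ref{cor:iso}, Theorem~\ref{thm:atoMoore}) actually use. Your closing plan --- cite the sources and reproduce only the unit/counit derivation plus the three short computations --- is a reasonable middle ground between the two.
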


Particular cases of Statements 1. and 2. of Theorem~\ref{thm:iso} are proven in Proposition~\ref{prop:prop} for the pair of operators $((\cdot)^\cup,(\cdot)^\subseteq)$. Statement 3 implies the following corollary.

\begin{corollary}\label{cor:iso} For a given binary relation $R \subseteq [n] \times U$ and the Galois adjunction $((\cdot)^\cup,(\cdot)^\subseteq)$,  the ordered set of  dual closures  $\mathcal K_{\cup\subseteq}=\{ A^{\cup\subseteq} \mid A \subseteq [n]\}$ is equal to $\mathcal K_{\subseteq}=\{ B^\subseteq\mid B \subseteq U\}$ and the ordered  set of  closures $\mathcal M_{\subseteq\cup}=\{ B^{\subseteq\cup} \mid B \subseteq U\}$ is equal to $\mathcal M_{\cup}=\{ A^\cup\mid A \subseteq [n]\}$. $K_{\cup\subseteq}$ and $\mathcal M_{\subseteq\cup}$ are isomorphic subject to restrictions of $(\cdot)^\cup$ and $(\cdot)^\subseteq$.

\end{corollary}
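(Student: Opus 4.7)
The plan is to apply Theorem~\ref{thm:iso} directly with the Galois adjunction already verified in the preceding theorem. Concretely, I would instantiate the abstract statement with $P = 2^{[n]}$ and $Q = 2^U$, both ordered by inclusion, upper adjoint $\alpha = (\cdot)^\cup$ and lower adjoint $\beta = (\cdot)^\subseteq$. Since the previous theorem shows that $((\cdot)^\cup, (\cdot)^\subseteq)$ is a Galois adjunction between these posets, the hypotheses of Theorem~\ref{thm:iso} are met without further work, and Proposition~\ref{prop:prop} already supplied concrete incarnations of all the pieces (kernel, closure, and idempotency) needed on the way.

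The next step is to unfold the conclusion of Statement~3 of Theorem~\ref{thm:iso} into the notation of the corollary. By definition, $\varphi(P) = \{\beta\alpha A \mid A \subseteq [n]\} = \{A^{\cup\subseteq} \mid A \subseteq [n]\} = \mathcal K_{\cup\subseteq}$, while $\beta(Q) = \{B^\subseteq \mid B \subseteq U\} = \mathcal K_\subseteq$; hence the identity $\varphi(P) = \beta(Q)$ from Statement~3 is exactly $\mathcal K_{\cup\subseteq} = \mathcal K_\subseteq$. Symmetrically, $\psi(Q) = \{B^{\subseteq\cup} \mid B \subseteq U\} = \mathcal M_{\subseteq\cup}$ and $\alpha(P) = \{A^\cup \mid A \subseteq [n]\} = \mathcal M_\cup$, yielding $\mathcal M_{\subseteq\cup} = \mathcal M_\cup$.

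Finally, the isomorphism claim is the restriction statement of Theorem~\ref{thm:iso}.3: the maps $(\cdot)^\cup$ and $(\cdot)^\subseteq$ restrict to mutually inverse order isomorphisms between $\mathcal K_{\cup\subseteq}$ and $\mathcal M_{\subseteq\cup}$. For a direct sanity check, one can observe that if $A = A^{\cup\subseteq}$ then $(A^\cup)^\subseteq = A^{\cup\subseteq\cup\subseteq} = A^{\cup\subseteq} = A$ by Proposition~\ref{prop:prop}.3, and dually $(B^\subseteq)^\cup = B$ whenever $B = B^{\subseteq\cup}$; order preservation in both directions is Proposition~\ref{prop:prop}.1.

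Honestly, there is no real obstacle: all the structural work was absorbed into Proposition~\ref{prop:prop} and the preceding theorem, so this corollary is essentially a translation exercise. The only place to be slightly careful is making sure the equalities $\mathcal K_{\cup\subseteq} = \mathcal K_\subseteq$ and $\mathcal M_{\subseteq\cup} = \mathcal M_\cup$ are read as equalities of \emph{subsets} of $2^{[n]}$ and $2^U$ respectively (rather than mere bijections), which follows because $\beta\alpha\beta = \beta$ and $\alpha\beta\alpha = \alpha$ force every element of the image of $\beta$ to be a fixed point of $\varphi$, and symmetrically on the other side.
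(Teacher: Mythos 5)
Your proposal is correct and matches the paper's own argument: the paper likewise obtains the corollary by instantiating Statement~3 of Theorem~\ref{thm:iso} with $P=2^{[n]}$, $Q=2^U$, $\alpha=(\cdot)^\cup$, $\beta=(\cdot)^\subseteq$, with Proposition~\ref{prop:prop} supplying the concrete operator properties. Your extra sanity checks (mutual inverseness on fixed points and the fixed-point characterization of the images via $\beta\alpha\beta=\beta$) are sound but not needed beyond the theorem itself.
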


Let us consider $L(\mathbb M)$ on a proper set $\mathbb M =\{X_1, X_2, \ldots, X_k\}$ (i,e., the antichain of $X_i, i\in\{1,\ldots, k\}$ ) and the related incidence relation $R \subseteq [n] \times U$ with the Galois adjunction $((\cdot)^\cup,(\cdot)^\subseteq)$.

\begin{theorem}\label{thm:atoMoore}
$L(\mathbb M)=(\mathcal M_{\subseteq\cup}, \subseteq)$ is atomic lattice. Conversely, every atomic lattice is isomorphic to some $(\mathcal M_{\subseteq\cup}, \subseteq)$.
\end{theorem}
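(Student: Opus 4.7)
The plan is to prove both implications by leveraging the Galois-adjunction machinery above together with Theorem~\ref{thm:sepiso}. For the forward direction I first argue that $(\mathcal M_{\subseteq\cup}, \subseteq)$ is a complete lattice: by Corollary~\ref{cor:iso}, $\mathcal M_{\subseteq\cup}$ is the family of fixed points of the closure operator $\psi=(\cdot)^{\subseteq\cup}$ on $2^U$, so it is closed under intersection and contains $U=[n]^{\cup}$; Lemma~\ref{lemma:infcomp} then supplies a complete lattice whose meets are intersections and whose joins are $(\cdot)^{\subseteq\cup}$-closures of unions, with zero element $\emptyset$ after dispensing with the trivial case $\mathbb M=\{\emptyset\}$.

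Next I identify the atoms. Each $X_i=\{i\}^{\cup}$ lies in $\mathcal M_{\subseteq\cup}$, and properness of $\mathbb M$ forces it to be an atom: whenever $\emptyset\neq Y=\bigcup_{j\in A}X_j\subseteq X_i$, each $X_j\subseteq X_i$ must coincide with $X_i$, so $Y=X_i$. Moreover every element $A^{\cup}=\bigcup_{i\in A}X_i$ is a union of elements of $\mathcal M_{\subseteq\cup}$ that already sits in $\mathcal M_{\subseteq\cup}$, hence coincides with $\bigvee_{i\in A}X_i$; this exhibits every element as a join of atoms and establishes atomicity.

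For the converse, given an atomic lattice $\mathbf L$ with $n$ atoms, I would invoke Theorem~\ref{thm:sepiso} to replace $\mathbf L$ by an isomorphic strict T1 closure system $\mathcal M_0\subseteq 2^{[n]}$. A natural realisation of $\mathcal M_0$ as some $\mathcal M_{\subseteq\cup}$ is obtained by taking $U$ to be the set of meet-irreducible elements of $\mathcal M_0$ and setting $X_i:=\{M\in U\mid i\notin M\}$, with $R$ the induced incidence on $[n]\times U$. Properness of $\mathbb M$ follows from T1: if $X_i\subseteq X_j$ then every meet-irreducible containing $j$ contains $i$, and since $\{j\}$ equals the intersection of the meet-irreducibles of $\mathcal M_0$ above it, we conclude $i=j$. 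The assignment $B\mapsto\bigcup_{i\in B}X_i=\{M\in U\mid B\not\subseteq M\}$ from $\mathcal M_0$ into $\mathcal M_{\subseteq\cup}$ is then the sought lattice isomorphism. The main obstacle is verifying surjectivity — that every $A^{\cup}$ with $A\subseteq[n]$ already equals $\{M\in U\mid B\not\subseteq M\}$ for some $B\in\mathcal M_0$ — which reduces to the classical Formal Concept Analysis fact that each closed set equals the intersection of the meet-irreducibles above it, applied to $B:=\varphi_{\mathcal M_0}(A)$; once this surjectivity is in hand, order preservation in both directions is a routine manipulation with the Galois adjunction.
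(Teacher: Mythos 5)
Your proposal is correct, and its forward direction is essentially the paper's argument: identify $L(\mathbb M)$ with $(\mathcal M_{\subseteq\cup},\subseteq)=(\mathcal M_\cup,\subseteq)$ via Corollary~\ref{cor:iso}, take $\emptyset$ as zero, and exhibit every element as a join of the images of singletons. One small difference there: the paper asserts $\{i\}^{\cup\subseteq}=\{i\}$ (justified only later, in Theorem~\ref{thm:ac2sep}) and works on the $[n]$-side kernel system, whereas you verify directly on the $U$-side that properness makes each $X_i$ an atom; your version is marginally more self-contained and does not borrow from Section 4. The genuine divergence is in the converse. The paper disposes of it in two lines: it invokes Theorem~\ref{thm:sepiso} and observes that $\mathcal M_{\subseteq\cup}$ is strict and satisfies T1, leaving implicit how a given atomic lattice is actually realized by a concrete relation $R\subseteq[n]\times U$ (in the paper's larger narrative this realization is supplied only afterwards, by the complementation mechanism of Theorem~\ref{thm:clo_dclo} combined with Theorem~\ref{thm:ac2sep}). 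You instead construct the realizing relation explicitly: $U$ is the set of meet-irreducibles of the T1 closure system, $X_i=\{M\in U\mid i\notin M\}$ --- i.e., the complement of the (row-restricted) standard context --- and you check properness from T1 and surjectivity from the classical fact that every closed set is the intersection of the meet-irreducibles above it. This buys a self-contained proof that also produces a column-reduced, hence small, witness relation, at the modest cost of importing that classical FCA fact; the paper's route is shorter but only becomes fully rigorous once the later theorems are in hand. The steps you gloss (order-reflection of $B\mapsto B^\cup$ on closed sets, which follows from the same meet-irreducible fact, and the nonemptiness of each $X_i$ after excluding the trivial case) are indeed routine, so the sketch stands as a complete argument.
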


\begin{proof}  $\Rightarrow$ $L(\mathbb M)=(\mathcal M_{\subseteq\cup}, \subseteq)$ since, by Corollary~\ref{cor:iso}, $(\mathcal M_{\subseteq\cup}, \subseteq)=(\mathcal M_{\cup}, \subseteq)$ and  $A^\cup= \mathbb M(A)$ for $A \subseteq M$, by the definitions of operators $(\cdot)^\cup$ and $\mathbb M(\cdot)$. The zero element of $L(\mathbb M)$ is $\mathbf 0_{L(\mathbb M)}=\mathbf 0_{\mathcal M_{\subseteq\cup}}=\emptyset$. The upper neighbors of $\mathbf 0_{L(\mathbb M)}$ are closed single-element sets $\{i\}=\{i\}^{\cup\subseteq}=\mathbb M(\{i\})$ for $i \in [n]$ and every $A \in \mathcal M_{\subseteq\cup}$ is equal to $\bigcup\limits_{i \in A}\{i\}$. 

\noindent 
$\Leftarrow$  By Theorem~\ref{thm:sepiso}. Since every single-element set is closed in  $\mathcal M_{\subseteq\cup}$ (T1 separation axiom is fulfilled) and $\emptyset^{\cup\subseteq}=\emptyset$ (the system $\mathcal M_{\subseteq\cup}$ is strict). Note that for $n=1$, we have the only system $\mathcal M_{\subseteq\cup}=\{\emptyset,\{1\}\}$.
\end{proof}

Theorem~\ref{thm:atoMoore} allows us to transfer atomicity to Moore families. Then let us call any Moore family containing all single element sets from its base set \textit{atomic} or, equivalently, \textit{atomic closure system}.

Corollary~\ref{cor:iso} implies that kernel system $\mathcal K_{\cup\subseteq}$ is also isomorphic to $L(\mathbb M)$. Actually, one can pair the fixed points of $(\cdot)^{\cup\subseteq}$ and $(\cdot)^{\subseteq\cup}$ via the Galois adjunction as follows.

For a given Galois adjunction $(\alpha, \beta)$ between two ordered sets $(P,\leq)$ and $(Q,\leq)$, consider a pair $( p,q)$, where $p \in P$ and $q \in Q$ and  $p=\beta  q$ and $q= \alpha p$.

In case of the adjunction on a binary set $R\subseteq [n] \times U$ for Davis' lattice, such pairs are called \textit{upper concepts} \cite{Wolski:2004,Caspard:2012}, and we have the \textit{lattice of upper concepts} $(\mathcal A, \sqsubseteq)$ such that 

$$\mathcal A =\{ (A,B) \mid  A^\cup= B \mbox{ and } B^\subseteq=A \mbox{ for } A\subseteq [n], B \subseteq U\}$$

and

$$(A,B) \sqsubseteq (C,D) \iff A \subseteq C \mbox{ and } B \subseteq D \mbox{ for } (A,B), (C,D) \in \mathcal A.$$

We provide several examples of such lattices in the next subsection.

One more ismorphism exists between atomic and LCM lattices, where LCM stands for least common multiplier; see, for example, S. Mapes work~\cite{Mapes:2010}.

\subsection{Examples}

Let us consider several binary relations and the lattices of their lower concepts. In Fig.~\ref{fig:context}, one can see three $3 \times 3$ exemplary binary relations often used in Formal Concept Analysis for data scaling~\cite{GanterWille:1999}.

\begin{figure}[ht]\label{fig:context}
  \centering
\begin{minipage}{0.3\linewidth}

\begin{center}
\begin{cxt}%
\cxtName{}%
\att{$a$}%
\att{$b$}%
\att{$c$}%
\obj{xxx}{1}
\obj{xx.}{2}
\obj{x..}{3}
\end{cxt}
\end{center}

\end{minipage}
\begin{minipage}{0.3\linewidth}

\begin{center}
\begin{cxt}%
\cxtName{}%
\att{$a$}%
\att{$b$}%
\att{$c$}%
\obj{x..}{1}
\obj{.x.}{2}
\obj{..x}{3}
\end{cxt}
\end{center}

\end{minipage}
\begin{minipage}{0.3\linewidth}

\begin{center}
\begin{cxt}%
\cxtName{}%
\att{$a$}%
\att{$b$}%
\att{$c$}%
\obj{.xx}{1}
\obj{x.x}{2}
\obj{xx.}{3}
\end{cxt}
\end{center}

\end{minipage} 
 \caption{Example relations for order, nominal, and contranominal scales.}
  \label{fig:scales}
\end{figure}

The line (or Hasse) diagrams of the relations are shown in Fig.~\ref{fig:diag}. The shaded nodes depict atoms of the lattices. The leftmost lattice is not atomic since it is a chain of four elements.

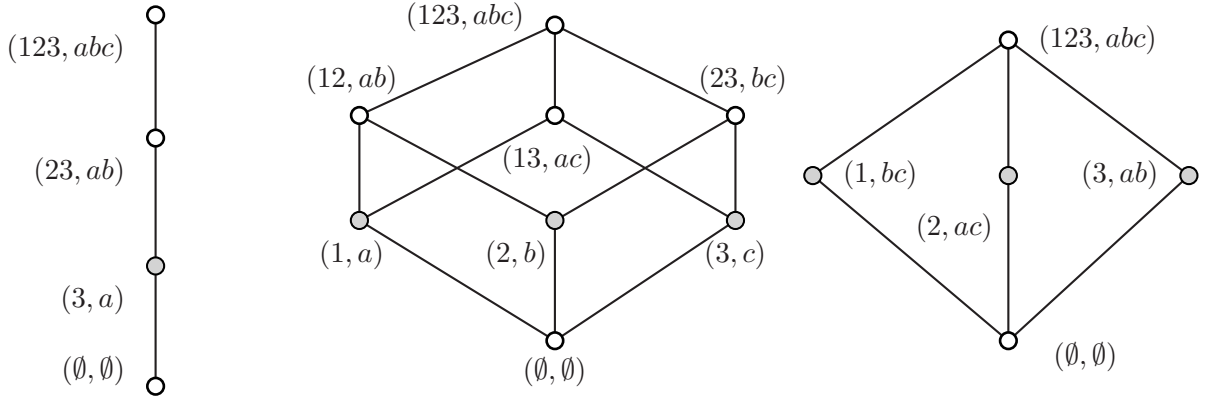
\begin{figure}[ht]\label{fig:diag}
\begin{minipage}[h]{0.3\linewidth}  
  \centering
	\begin{picture}(100,140)
\unitlength 0.20mm
\begin{diagram}{180}{260}
\Node{1}{100}{252}
\Node{2}{100}{170}
\Node{3}{100}{85}
\Node{4}{100}{5}
\Edge{1}{2}
\Edge{2}{3}
\Edge{3}{4}
\NoDots
\leftObjbox{1}{20}{13}{$(123,abc)$}
\leftObjbox{2}{20}{13}{$(23,ab)$}
\leftObjbox{3}{20}{13}{$(3,a)$}
\leftObjbox{4}{20}{-20}{$(\emptyset,\emptyset)$}
\CircleSize{11}
\end{diagram}
\put(-80,85){\color{lightgray}{\circle*{10}}}%
\end{picture}
 \end{minipage} 
\hspace{2mm}
\begin{minipage}[h]{0.3\linewidth}
\begin{picture}(100,100)
\unitlength 0.20mm
\begin{diagram}{250}{210}
\Node{1}{130.0}{210}
\Node{2}{0}{150.0}
\Node{3}{130}{150.0}
\Node{4}{250}{150.0}
\Node{5}{0.0}{80.0}
\Node{6}{130.0}{80.0}
\Node{7}{250.0}{80.0}
\Node{8}{130.0}{0.0}
\Edge{1}{2}
\Edge{1}{3}
\Edge{1}{4}
\Edge{2}{5}
\Edge{2}{6}
\Edge{3}{5}
\Edge{3}{7}
\Edge{4}{6}
\Edge{4}{7}
\Edge{5}{8}
\Edge{6}{8}
\Edge{7}{8}
\NoDots
\leftObjbox{1}{20}{-15}{$(123,abc)$}
\centerObjbox{2}{-5}{-33}{$(12,ab)$}
\centerObjbox{3}{-5}{20}{$(13,ac)$}
\centerObjbox{4}{5}{-33}{$(23,bc)$}
\centerObjbox{5}{-5}{13}{$(1,a)$}
\leftObjbox{6}{5}{13}{$(2,b)$}
\centerObjbox{7}{0}{13}{$(3,c)$}
\centerAttbox{8}{0}{-30}{$(\emptyset,\emptyset)$}
\CircleSize{11}
\end{diagram}
\put(0,80){\color{lightgray}{\circle*{10}}}%
\put(-120,80){\color{lightgray}{\circle*{10}}}%
\put(-250,80){\color{lightgray}{\circle*{10}}}%
\end{picture}
  \end{minipage} 
  \hspace{8mm}
\begin{minipage}[h]{0.3\linewidth}
\begin{picture}(100,100)
  \unitlength 0.20mm
\begin{diagram}{150}{200}%
\Node{1}{130.0}{200}%
\Node{2}{0}{110.0}%
\Node{3}{130}{110.0}%
\Node{4}{250}{110.0}%
\Node{5}{130.0}{0.0}%
\Edge{1}{2}%
\Edge{1}{3}%
\Edge{1}{4}%
\Edge{2}{5}%
\Edge{3}{5}%
\Edge{4}{5}%
\NoDots%
\rightObjbox{1}{20}{-10}{$(123,abc)$}%
\rightObjbox{2}{20}{-10}{$(1,bc)$}%
\leftObjbox{3}{10}{25}{$(2,ac)$}%
\leftObjbox{4}{20}{-10}{$(3,ab)$}%
\rightAttbox{5}{30}{-20}{$(\emptyset,\emptyset)$}%
\CircleSize{11}
\end{diagram}
\put(100,110){\color{lightgray}{\circle*{10}}}%
\put(-20,110){\color{lightgray}{\circle*{10}}}%
\put(-150,110){\color{lightgray}{\circle*{10}}}%
\end{picture}
  \end{minipage} 
 \caption{The line diagrams of the lattices of lower concepts for the binary relations in Fig.~\ref{fig:context}, from  left to right, respectively.}
  \label{fig:posneglatts}
\end{figure}

From Mapes~\cite{Mapes:2010}, we know that $\mathcal L_n$, the lattice of all atomic lattices is also an atomic lattice. The central diagram in Fig.~\ref{fig:diag} shows the  unit element (the Boolean lattice on three elements) of $\mathcal L_3$, while the rightmost diagram shows its zero element (the diamond $M_3$), w.r.t. to the established isomorphism of lattices. 

\section{Algorithms}\label{sec:algo}

This section introduces the algorithm to traverse and count all binary relations resulting in unique atomic lattices.

First, we note that system of sets is proper (in the sense of Davis) if and only if each element in closed. 

\begin{theorem}\label{thm:ac2sep} Let $R \subseteq [n] \times U$ be a binary relation with the Galois adjunction $((\cdot)^\cup,(\cdot)^\subseteq)$. 1) For every pair $i,j \in [n]$ such that $i\neq j$, it fulfills $\{i\}^\cup\not\subseteq \{j\}^\cup$ (antichain condition) $\iff$  2)  for $i \in [n]$  $\{i\}$ is closed w.r.t. set intersection (T1 separation axiom) $\iff$  3) for $i \in [n]$ $\{i\}$ is dually closed, i.e. $\{i\}^{\cup\subseteq}=\{i\}$.

\end{theorem}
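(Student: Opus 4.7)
The plan is to reduce the three-way equivalence to two short chains, $(1)\Leftrightarrow(3)$ and $(2)\Leftrightarrow(3)$, both of which reduce to unwinding the definitions of $(\cdot)^{\cup}$, $(\cdot)^{\subseteq}$ and of the induced notion of a (dually) closed set.

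For $(1)\Leftrightarrow(3)$ I would start by computing the composite on a singleton. By the definitions in Section~\ref{sec:crypto}, $\{i\}^{\cup}=X_{i}$, and for any $B\subseteq U$, $B^{\subseteq}=\{j\in[n]\mid X_{j}\subseteq B\}$; thus
\[
\{i\}^{\cup\subseteq}=\{j\in[n]\mid X_{j}\subseteq X_{i}\},
\]
which obviously contains $i$. Consequently $\{i\}^{\cup\subseteq}=\{i\}$ is equivalent to the statement that no $j\neq i$ satisfies $X_{j}\subseteq X_{i}$. Quantifying over all $i\in[n]$ and swapping the names of the indices, this is exactly the antichain condition $\{i\}^{\cup}\not\subseteq\{j\}^{\cup}$ for every pair $i\neq j$, i.e.\ condition~(1).

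For $(2)\Leftrightarrow(3)$ I would appeal to Corollary~\ref{cor:iso}: the family $\mathcal{K}_{\cup\subseteq}$ of dually closed subsets of $[n]$ coincides with the set of fixed points of the operator $(\cdot)^{\cup\subseteq}$. Belonging to this system is therefore the same as being fixed by $(\cdot)^{\cup\subseteq}$, so the T1 separation condition ``every $\{i\}$ is closed'' is literally the statement $\{i\}^{\cup\subseteq}=\{i\}$ for all $i\in[n]$.

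I do not anticipate a genuine obstacle: the argument is mostly definition-chasing. The only bookkeeping worth emphasising is keeping track of which side of the Galois adjunction we are on. Since the singletons $\{i\}$ in the theorem live in $[n]$ rather than in $U$, the relevant closure system for the T1 axiom is $\mathcal{K}_{\cup\subseteq}$ (on the index side), not $\mathcal{M}_{\subseteq\cup}$; once this identification is made via Corollary~\ref{cor:iso}, both equivalences fall out immediately.
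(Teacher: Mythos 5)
Your $(1)\Leftrightarrow(3)$ computation is correct and is essentially the paper's own step $1\Rightarrow 3$ (the paper writes $\{i\}^{\cup\subseteq}=\{k\mid \{k\}^\cup\subseteq\{i\}^\cup\}$ and notes it collapses to $\{i\}$ exactly under the antichain condition). The gap is in your $(2)\Leftrightarrow(3)$: it begs the question. You dispose of condition (2) by declaring that ``the relevant closure system for the T1 axiom is $\mathcal K_{\cup\subseteq}$,'' after which (2) becomes, as you yourself say, \emph{literally} the statement $\{i\}^{\cup\subseteq}=\{i\}$ --- that is, you have redefined (2) to be (3), and the equivalence you then observe is a definitional tautology in which the phrase ``closed w.r.t.\ set intersection'' does no work at all. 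Corollary~\ref{cor:iso} does not license this identification: it only asserts $\mathcal K_{\cup\subseteq}=\mathcal K_{\subseteq}$ and the isomorphism with $\mathcal M_{\subseteq\cup}$; it says nothing about which sets arise as intersections. In the paper, (2) is a genuinely different condition about intersections of rows: $\{i\}^\cup$ admits no representation $\{i\}^\cup=\bigcap_{k\in S}\{k\}^\cup$ with $S\neq\{i\}$ (equivalently, via the complementation argument of Theorem~\ref{thm:clo_dclo}, $\{i\}$ is closed in the Moore family attached to $\overline{R}$), and the paper's cyclic proof $1\Rightarrow 3\Rightarrow 2\Rightarrow 1$ is precisely the bridge between such intersection representations and containments between rows.

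That no purely definitional identification can work is shown by the fact that, for a \emph{fixed} $i$, (2) and (3) are inequivalent: (3) forbids containments $\{j\}^\cup\subseteq\{i\}^\cup$ below the row of $i$, whereas (2) forbids containments $\{i\}^\cup\subseteq\{k\}^\cup$ above it, since any such containment yields the nontrivial representation $\{i\}^\cup=\{i\}^\cup\cap\{k\}^\cup$. Concretely, take $n=2$ with $X_1=\{a\}$, $X_2=\{a,b\}$: then $\{1\}^{\cup\subseteq}=\{1\}$, so (3) holds at $i=1$, yet $X_1=X_1\cap X_2$, so (2) fails at $i=1$ (and dually (2) holds at $i=2$ while (3) fails there). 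The three conditions coincide only after the universal quantification over $i\in[n]$, when each becomes the antichain condition; your argument never engages with this quantifier, while the paper's does. The repair is short: assuming (3) for all $i$ (equivalently (1), by your first computation), no row is contained in another, so $\{i\}^\cup=\bigcap_{k\in S}\{k\}^\cup$ forces $S=\{i\}$, giving (2); conversely, if (1) failed, say $\{i\}^\cup\subseteq\{j\}^\cup$ with $i\neq j$, then $\{i\}^\cup=\{i\}^\cup\cap\{j\}^\cup$ violates (2) at $i$. With that substituted for the appeal to Corollary~\ref{cor:iso}, your proof is complete.
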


\begin{proof} $1 \Rightarrow 3$. If $\{i\}^\cup\not\subseteq \{j\}^\cup (i \neq j)$, then $\{i\}^{\cup\subseteq}=\{k |\{k\}^\cup \subseteq \{i\}^\cup\}=\{i\}$.

\noindent $3 \Rightarrow 2$. If  $\{i\}^{\cup\subseteq}=\{i\}$, then there is no $j\neq i$ such that $\{j\}^{\cup}  \subseteq \{i\}^{\cup} $. Hence, there is a unique $S \subseteq [n]$ such that $\{i\}^\cup=\bigcap\limits_{k \in S\subseteq [n]} \{k\}^\cup$, namely, $S=\{i\}$.

 \noindent  $2 \Rightarrow 1$. If $\{i\}$ cannot be intersection of any $S \subseteq [n]$ except $S=\{i\}$, then $\{i\}^\cup \not \subseteq \bigcap\limits_{j\in S} \{j\}^\cup$ for every $S \subseteq [n]$ such that $i \not\in S$. Hence, there is no $j\neq i$ such that $\{j\}^{\cup}  \subseteq \{i\}^{\cup} $ and $\{i\}^\cup$ cannot be union of sets $T \subseteq  U$ except $T=\{i\}^\cup$, which implies $\{i\}^{\cup\subseteq}=\{i\}$.

\end{proof}

Note the antichain condition also implies absence of duplicate rows in $R$ being represented as an incidence table since $\{i\}^\cup$ is ``the row'' of related elements to $i$ by $R$.

\begin{theorem}\label{thm:clo_dclo} Let $R \subseteq [n] \times U$ be a binary relation with the Galois adjunction $((\cdot)^\cup,(\cdot)^\subseteq)$.  $A\subseteq [n]$ is open (dually closed), $\{A\}^{\cup\subseteq}=\{A\}$ $\iff$ $ A$ is closed w.r.t. set intersection in the complementary binary relation $\overline R$ such that $(i,u) \in \overline{R} \iff (i,u) \not \in R$.

\end{theorem}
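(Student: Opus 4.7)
The plan is to show that, as a map $2^{[n]}\to 2^{[n]}$, the kernel operator $(\cdot)^{\cup\subseteq}$ associated to $R$ coincides with the closure operator $(\cdot)^{\prime\prime}$ of the standard polarity on the formal context $([n],U,\overline R)$. Once this identity is secured, the biconditional is immediate, since the open sets of $R$ are the fixed points of $(\cdot)^{\cup\subseteq}$ and the sets ``closed w.r.t.\ set intersection in $\overline R$'' are the fixed points of $(\cdot)^{\prime\prime}$, i.e., members of the Moore family generated by the singleton derivations $\{u\}^{\prime}$ of $\overline R$.

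Concretely, I would first introduce the polarity derivations of $\overline R$,
$$A^{\prime}:=\{u\in U:i\,\overline R\,u\text{ for all }i\in A\},\qquad B^{\prime}:=\{i\in [n]:i\,\overline R\,u\text{ for all }u\in B\},$$
and unfold the definition of $\overline R$ to obtain the complement identity $A^{\prime}=U\setminus A^\cup$: the condition ``$u$ is $\overline R$-related to every $i\in A$'' is the negation of ``some $i\in A$ is $R$-related to $u$'', which is precisely $u\notin A^\cup$. Applying $(\cdot)^{\prime}$ once more to $U\setminus A^\cup$ then yields
$$A^{\prime\prime}=\{i\in [n]:\{i\}^\cup\subseteq A^\cup\},$$
because $i\in A^{\prime\prime}$ exactly when every $u$ with $iRu$ lies in $A^\cup$. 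Comparing with the definition $A^{\cup\subseteq}=\{i\in [n]:iRu\Rightarrow u\in A^\cup\}$, the two operators agree on every $A\subseteq [n]$, and the theorem follows by taking fixed points on both sides.

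I do not foresee any genuine obstacle. The only step that requires care is the complement bookkeeping when converting a universal quantifier over $\overline R$ into the negation of an existential quantifier over $R$; once this is handled cleanly, the rest is a two-line manipulation. In spirit, the statement is an instance of the classical duality between the axiality of a relation and the polarity of its complement, so the bulk of the work is a careful rewriting rather than any substantial new idea.
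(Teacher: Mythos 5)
Your proof is correct and is essentially the paper's own argument in streamlined form: the paper likewise passes to the rows of $\overline R$ via De Morgan (its identity $\overline{A^{\cup}}=\bigcap_{i\in A}\{i\}'$ is your $A'=U\setminus A^{\cup}$), and its two separate maximality checks in the $\Rightarrow$ and $\Leftarrow$ directions are exactly the unfolding, at a fixed point, of your single operator identity $A''=\{i\in[n]:\{i\}^{\cup}\subseteq A^{\cup}\}=A^{\cup\subseteq}$, so your version merely packages both implications at once. One consistency remark: your own computation shows $(\cdot)^{\cup\subseteq}$ is extensive (each $i\in A$ satisfies $\{i\}^{\cup}\subseteq A^{\cup}$), so calling it the kernel operator --- a label inherited from Proposition~\ref{prop:prop} of the paper, whose contractivity claim for this composite does not hold --- is a mislabel, but a harmless one, since your argument uses only the fixed-point condition $A^{\cup\subseteq}=A$ and never contractivity.
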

\begin{proof}

Let $\{i\}'=\overline{\{i\}^\cup}$, i.e. $m \in \{i\}'\iff  m \not\in \{i\}^\cup$.

$\Rightarrow$ Taking into account contraposition, for any dually closed $A \subseteq [n]$ we get

$$\overline {A^{\cup}}=\overline{\bigcup\limits_{i\in A} \{i\}^\cup}=\bigcap\limits_{i \in A} \overline{\{i\}^\cup}=\bigcap\limits_{i \in A} \{i\}' \mbox{ , i.e.,}$$

\noindent the intersection of rows  in $\overline R$ with all row indices from $A$. There are no other $j \in [n]\setminus A$ with $\bigcap\limits_{i \in A} \{i\}' \subseteq \{ j\}'$ since $A$ is open, i.e. $\bigcup\limits_{i \in A} \{i\}^\cup \not\supseteq \{ j\}^\cup$.

\noindent  $\Leftarrow$ Since $A$ is closed, then there exists $B=\bigcap\limits_{i \in A} \{i\}'$, i.e.\ intersection of all rows in $\overline{R}$ with indices from $A$. By contraposition, we get 

$$\overline {B}=\overline{\bigcap\limits_{i\in A} \{i\}^\prime}=\bigcup\limits_{i \in A} \overline{\{i\}^\prime}=\bigcup\limits_{i \in A} \{i\}^\cup \mbox{ , i.e.,}$$

\noindent the union of rows  in $R$ with all row indices from $A$.

There are no other $j \in [n]\setminus A$ with $ \{ j\}^\cup \subseteq \overline{B} $ since $A$ is closed, i.e. $\bigcap\limits_{i \in A} \{i\}^\prime \not\subseteq \{ j\}^\prime$.

\end{proof}

Theorems~\ref{thm:ac2sep} and \ref{thm:clo_dclo} allow us working with enumeration of all closure systems and that of all kernel systems or all systems of sets closed under union interchangeably, given fixed $n$ and the smallest $R$ w.r.t the size of $U$. Similar replacement of Moore families by set systems closed under union was exploited by Brinkmann and Deklerck~\cite{Brinkmann:2018}.

Since different binary relations $R \subseteq [n] \times U$ with a fixed $n$ can produce the same Moore families on $[n]$ (e.g., by removing a full column $[n] \times \{u\}$ in $R$ if  $[n] \times \{u\} \subseteq R$), we need to identify valid ways to reduce $U$ and thus $R$ without affecting the resulting Moore family.

\begin{definition}[adopted from Ganter and Wille \cite{GanterWille:1999}] We call a binary relation $R \subseteq [n] \times U$ \textit{column reduced} if 1) it is clarified, i.e.\ $R$ does not contain duplicate rows and columns ($\forall i,j \in [n]:  \{i\}^\cup=\{j\}^\cup \Rightarrow i=j$; similarly, for $u,v \in U$) and 2) there is no $u \in U$, which can be obtained by intersection of other columns $X \subseteq U$, i.e.\ $u \not \in X$ and  $\bigcap\limits_{x \in X} x^\subseteq\neq u^\subseteq$. 
\end{definition}

A  \textit{row reduced} binary relation is defined similarly. If $R$ is both row and column reduced, $R$ is called \textit{reduced}.

In practice, we cannot simultaneously eliminate all the rows and the columns that are \textit{reducible}, but this is no problem if we add rows (or columns) in a lectic order and check reducibility. 

By Sperner theorem~\cite{Sperner:1928} the largest set antichain in $2^{[n]}$ contains $\binom{n}{\lfloor n/2 \rfloor}$ sets. It makes it possible to deduce the exact lower bound for the number of elements in $U$ for Davis' lattice and the associated relation.

\begin{theorem}\label{thm:lowerbound} The smallest size of $U$ in $R \subseteq	[n] \times U$ such that the associated closure system is $T1$-separated (or atomic) is the minimal $k$ under which $\binom{k}{\lfloor k/2 \rfloor} \geq n$.

\end{theorem}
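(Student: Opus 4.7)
The plan is to reduce the statement to Sperner's theorem via Theorem~\ref{thm:ac2sep}. By that theorem, the closure system associated with $R \subseteq [n] \times U$ satisfies the T1 axiom if and only if the family of rows $\{\{i\}^\cup \mid i \in [n]\} \subseteq 2^U$ is an antichain consisting of $n$ distinct sets (the antichain condition is exactly the statement that $\{i\}^\cup \not\subseteq \{j\}^\cup$ for $i \neq j$, which in particular forces distinctness of the rows). So the existence question for a given $|U|=k$ is equivalent to asking whether $2^U$ admits an antichain of size $n$.

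The first step is to invoke Sperner's theorem, which states that the maximum size of an antichain in $2^{[k]}$ is exactly $\binom{k}{\lfloor k/2\rfloor}$, realized for instance by the family of all $\lfloor k/2 \rfloor$-element subsets. This immediately yields the necessity direction: if the closure system on $[n]$ is T1-separated, then the $n$ rows form an antichain in $2^U$, hence $n \leq \binom{|U|}{\lfloor |U|/2 \rfloor}$, which forces $|U| \geq k$ where $k$ is the minimal integer with $\binom{k}{\lfloor k/2 \rfloor} \geq n$.

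For sufficiency, I would fix such a minimal $k$ and exhibit an explicit $R$: pick any $n$ distinct subsets $S_1,\ldots,S_n$ of $U=[k]$ of size $\lfloor k/2 \rfloor$ (possible since $\binom{k}{\lfloor k/2 \rfloor}\geq n$), and define $R$ by $i\,R\,u \iff u \in S_i$. Then $\{i\}^\cup = S_i$, and since any two distinct sets of the same cardinality are incomparable under inclusion, the antichain condition holds. By Theorem~\ref{thm:ac2sep} again, the associated closure system satisfies T1, equivalently (via Theorem~\ref{thm:atoMoore}) it is atomic.

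There is no real obstacle here beyond correctly identifying that the antichain condition of Theorem~\ref{thm:ac2sep} is precisely the hypothesis of Sperner's theorem applied to the columns (seen as a family of subsets of $U$ indexed by $[n]$); the rest is a direct citation plus an explicit construction using a middle layer of the Boolean lattice on $U$.
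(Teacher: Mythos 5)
Your proof is correct and takes essentially the same approach as the paper: the paper states this theorem without a separate proof, deriving it directly from Sperner's theorem together with the antichain characterization of T1-separation in Theorem~\ref{thm:ac2sep}, exactly as you do, and your explicit middle-layer construction for sufficiency merely fills in what the paper leaves implicit (it is also the construction the paper alludes to when discussing Step~1 of Algorithm~\ref{alg:AtomicAddByOne}). One cosmetic note: the sets $\{i\}^\cup$ are the \emph{rows} of $R$, as you correctly say at first, not the columns as in your closing remark.
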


In our algorithms, we use binary set representation and work with integer types. For example, $7_{10}=111_2$ represents the set $\{1,2,3\}$ (or $\{a,b,c\}$) since the first three bits are on.

The \textsc{Atomic AddByOne} algorithm is inspired by the CloseByOne algorithm proposed by S.\ O.\ Kuznetsov~\cite{Kuznetsov:1993}.

\begin{algo} \textsc{Atomic AddByOne}\label{alg:AtomicAddByOne}

\noindent\textbf{Input}: the number of atoms $n \in \mathbb N$ ($n>1$)

\noindent\textbf{Output}:  the number of Moore families fulfilling T1 separation axiom

\

1.\ Generate all combinations $\binom{2^{[n]}\setminus \{\emptyset,[n]\}}{k_{min}}$ in lectic order.

2.\ Check each combination represented by a tuple $t=(i_1,\ldots,i_k)$ whether it is a column reduced binary relation and fulfils T1 axiom. If yes, store 1 in $cnt[t]$.

3.\ Extend each valid tuple $t$ (the column  reduced binary relation) from step 2 by a next integer $i_{k+1}$ after $i_k$ from $\{i_k+1, \ldots, 2^n-2\}$ and check whether the new tuple $t^*=(i_1,\ldots,i_{k+1})$ is  a reduced binary relation and fulfils T1 axiom (if T1 was fulfilled for $t$, then skip T1-check). If yes, increment $cnt[t]$ and repeat step 3 with $t^*$ recursively.

4.\  Return the sum of all $cnt$-s.

\end{algo}

Step 1 excludes combinations with emptyset since $\emptyset$ should be present in the resulting system as intersection of atoms by Theorem~\ref{thm:sepiso}. Since full rows and full columns are reducible, $2^{n-1}$ is always excluded (every closure system on $[n]$ contains $[n]$ by definition). Note that  all subsets of $2^{[n]}$ of size $k_{min}$, which elements has $\lfloor k_{min}/2 \rfloor$ (or $\lceil k_{min}/2 \rceil$) bits each, forms the antichain of $k_{min}$ elements by Theorem~\ref{thm:lowerbound} and our previous work on Boolean matrix factorization of contranominal scales~\cite{Ignatov:2021}. So, Step 1 can be further improved accordingly for $n$ larger than 6 ($k_{min}=4$).

To enumerate inequivalent atomic Moore families, we apply all the permutations $\pi \in \Pi(n)$ on the set $[n]$ to every subset of a concrete Moore family represented by tuple $t$, i.e.\ we compute all $\pi(t)=(\pi(i_1), \ldots, \pi(i_k))$. We call $t$ \textit{canonic} if it is lectically smallest among all permuted tuples $\pi(t)$. Algorithm~\ref{alg:IneqAtomicAddByOne} counts each canonic representative per an equivalence class w.r.t.\ $\Pi(n)$.

\begin{algo} \textsc{Atomic IneqAddByOne}\label{alg:IneqAtomicAddByOne}

\noindent\textbf{Input}: the number of atoms $n \in \mathbb N$ ($n>1$)

\noindent\textbf{Output}:  the number of inequivalent Moore families fulfilling T1 separation axiom

\

The only modification of Algorithm~\ref{alg:AtomicAddByOne}  is done at step 3.

3'.\ We additionally check whether the new tuple $t^*$ is canonic and count only such tuples.

\end{algo}

All the implementations are coded in Python, speeded up with Cython extension and multiprocess library, and available on the author's Github\footnote{\url{https://github.com/dimachine/ClosureSeparation}} along with the results of experiments recorded in Jupyter notebooks.
 
\section{Resulting numbers and sequences}\label{sec:results}

By means of \textsc{Atomic AddByOne}, we obtained the new 6th member of \seqnum{A334254} and \seqnum{A334255}, i.e.\ enumerated 66960965307  atomic Moore families for $n=6$. The number of strict inequivalent atomic Moore families coincides with all the known members (up to $n=5$) of \seqnum{A235604} and, as implied by theorems~\ref{thm:sepiso} and \ref{thm:atoMoore}, its sixth member 95239971 obtained by us as well.

Since for the 2nd members \seqnum{A334255} and \seqnum{A334254} are different, as well as 2nd member of \seqnum{A235604}  differs from $a(2)$ of the sequence with the numbers of all inequivalent atomic and strict Moore families, we suggest adding a new sequence \seqnum{A355517}, which contains the numbers of strict non-isomorphic atomic Moore families for $n$ up to 6.

\begin{table}
\centering 
\begin{tabular}{|c|c|c|c|c|}
\hline
$n$ & \seqnum{A334254} & \seqnum{A334255} & \seqnum{A235604} & \seqnum{A355517}\\
\hline

0 & 1 & 1 & 1 & 1\\
1 & 2 & 1 & 1 & 2\\
2 & 1 & 1 & 1 & 1\\
3 & 8 & 8 & 4 & 4\\
4 & 545 & 545 & 50 & 50\\
5 & 702 525 & 702 525 & 7 443 & 7 443\\
6 & \it 66 096 965 307 & \it 66 096 965 307  & \it 95 239 971 & \it 95 239 971\\

\hline

\end{tabular}

\caption{Studied sequences with the found extensions in italic.}
\end{table}

In addition to that, we provide a new realization of \seqnum{A305233}. 
Since by Theorem~\ref{thm:lowerbound},  \seqnum{A305233} contains exactly the minimal number $k$ of elements needed to exceed a given number of atoms $n$ with the value of middle binomial coefficient $\binom{k}{\lfloor k/2 \rfloor}$ to form a proper antichain of $n$ sets with $\lfloor k/2 \rfloor$ elements each out of $k$.

Another our  contribution is related to the maximal size of the reduced contexts. 
The first five members for $n=1, \ldots, 5$ are known in the literature~\cite{GanterWille:1999}: 1, 2, 4, 7, 13. While we found the sixth term 24 by full enumeration with Algorithms 1 and 2.

For $n=6$, one out of ten Moore families of length 24 in its equivalence class is as follows:
$$\{7, 11, 13, 14, 19, 21, 22, 25, 26, 29, 30, 37, 38, 39, 41, 42, 43, 44, 49, 50, 51, 52, 56, 60\}.$$

For the 7th member, we state that it is not less than 41, since by combinatorial (though non-exhaustive) search we found the largest  set system of size 41 to form the reduced context of the Moore family: $\{7, 11, 13, 14, 19, 21, 22, 25, 26, 28, 35, 37, 38, 41, 42, 44, 49, 50, 52, 56, 67,$ $69, 70, 73, 74, 76, 81, 82, 84, 88, 97, 98, 100, 104, 113, 114, 116, 121, 122, 123, 124 \}$. In total, 420 Moore families of size 41 are in the found equivalence class. 

Note that the last results for $n=6$ and 7 are also valid for Moore systems without additional constraints.

\section{Other approaches}\label{sec:alter}

There are algorithms to systematically enumerate closed set of any closure operator, in particularly given in terms of Galois connections on a binary relation. To do so in our case for the family of atomic closure systems, which is isomorphic $\mathcal L_6$, we need to know its representation as a binary relation.

In FCA, Ganter and Wille~\cite{GanterWille:1999} showed that for any finite lattice $\mathbf L : = (L, \leq)$, there exists a unique binary relation on join and meet irreducible elements,  $J(\mathbf L)$ and $M(\mathbf L)$, respectively, such that the lattice formed by all its rows (or columns)  closures under intersection is isomorphic to the original lattice; this relation is called a \textit{standard context} and defined as the restriction of $\leq$, i.e.\ as $\leq \cap   J(\mathbf L) \times M(\mathbf L)$. We use $\mathbb K(\mathbf L):=(J(\mathbf L), M(\mathbf L), L), \leq)$ to denote the standard context of a lattice $\mathbf L$. A similar approach to represent and analyze finite lattices based on a poset of irreducibles is employed by G.~Markowsky (e.g., to answer the question from genetics: ``What is the smallest number of factors that can be used to represent a given phenotype system?")~\cite{Markowsky:2019}.

From Mapes we know about two important theorems on the number of  atoms and meet irreducible elements of $\mathcal L_6$. The first theorem on the number of atoms is attributed  to Phan~\cite{Phan:2006} by Mapes~\cite{Mapes:2010}. So, we add the description of atoms to the statement of this theorem, refine the range of valid $n$ and provide its shorter proof here. 

\begin{theorem} The number of atoms of the lattice $\mathcal L_n$ formed by all atomic closure families on $n>1$ is equal to $2^n-n-2$ and each atom has the form $\{\emptyset,\{1\},\ldots,\{n\},\sigma,[n]\}$, where $\sigma \subseteq [n]$ and $2\leq |\sigma| < n $.
\end{theorem}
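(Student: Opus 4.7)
The plan is to first identify the zero element of $\mathcal L_n$ explicitly and then characterize which single-element extensions of it yield valid atomic closure systems.

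\medskip

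\noindent \textbf{Step 1: Identify $\mathbf 0_{\mathcal L_n}$.} Every atomic closure system on $[n]$ must contain $\emptyset$ (by Proposition~\ref{prop:strictcl}, since $n>1$), every singleton $\{i\}$ (by the T1 axiom), and $[n]$ (by definition of a closure system). Hence the smallest atomic Moore family, and the zero element of the lattice $\mathcal L_n$ under $\subseteq$, is
\[
\mathcal M_0 \;=\; \{\emptyset,\{1\},\ldots,\{n\},[n]\}.
\]
One checks directly that $\mathcal M_0$ is closed under intersection (any two distinct singletons meet in $\emptyset$, and $[n]$ is a neutral element for $\cap$), so $\mathcal M_0$ itself belongs to $\mathcal L_n$.

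\medskip

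\noindent \textbf{Step 2: Atoms are one-element extensions.} I would show that an atom $\mathcal A$ of $\mathcal L_n$ must satisfy $|\mathcal A\setminus \mathcal M_0|=1$. Suppose for contradiction that $\sigma,\tau\in \mathcal A\setminus \mathcal M_0$ are distinct. Consider the intermediate family $\mathcal M_0\cup\{\sigma\}$. All pairwise intersections involving $\sigma$ reduce to members of $\mathcal M_0$: indeed $\sigma\cap[n]=\sigma$, $\sigma\cap\emptyset=\emptyset$, and $\sigma\cap\{i\}\in\{\emptyset,\{i\}\}$. Hence $\mathcal M_0\cup\{\sigma\}$ is itself an atomic closure system, and strictly sits between $\mathcal M_0$ and $\mathcal A$, contradicting the cover relation. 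Thus $\mathcal A=\mathcal M_0\cup\{\sigma\}$ for some $\sigma\subseteq[n]$.

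\medskip

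\noindent \textbf{Step 3: Admissible shapes of $\sigma$.} The added subset $\sigma$ cannot already belong to $\mathcal M_0$, which rules out $\emptyset$, $[n]$, and every singleton. Conversely, the same intersection check as in Step 2 shows that for every $\sigma\subseteq[n]$ with $2\leq|\sigma|\leq n-1$, the family $\mathcal M_0\cup\{\sigma\}$ is an atomic closure system and covers $\mathcal M_0$ in $\mathcal L_n$. So the atoms are exactly the families $\mathcal M_0\cup\{\sigma\}$ for $\sigma$ in that range.

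\medskip

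\noindent \textbf{Step 4: Count.} The number of subsets $\sigma\subseteq[n]$ with $2\leq|\sigma|\leq n-1$ equals
\[
2^n - \binom{n}{0} - \binom{n}{1} - \binom{n}{n} \;=\; 2^n - n - 2.
\]
The conceptual heart of the argument is Step 2, where one must justify that one cannot ``add several subsets at once'' to produce a cover; this is exactly where the strict T1-atomic structure (all singletons present) is used, since it ensures that any singleton-sized intersection is already available in $\mathcal M_0$ and hence no new element is forced upon adding a single $\sigma$. The remaining steps are routine once this reduction has been made.
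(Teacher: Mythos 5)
Your proof is correct and follows essentially the same route as the paper: identify the zero element $\{\emptyset,\{1\},\ldots,\{n\},[n]\}$, show that the families obtained by adjoining a single $\sigma$ with $2\leq|\sigma|\leq n-1$ are closure systems and are exactly the covers of it, and count them as $2^n-n-2$. You merely spell out details the paper leaves implicit (the intersection check for $\mathcal M_0\cup\{\sigma\}$ and the argument that a cover cannot add two sets at once, which the paper states in contrapositive form), so no substantive difference.
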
 
\begin{proof} 
For $n=2$ we have no atoms since $\mathbf 0_{\mathcal L_2} = \mathbf 1_{\mathcal L_2}=\{\emptyset,\{1\},\{2\},\{1,2\}\}$. 
The difference of $\mathcal A_\sigma = \{\emptyset,\{1\},\ldots,\{n\},\sigma,[n]\}$  and $\mathbf 0_{\mathcal L_n}=\{\emptyset,\{1\},\ldots,\{n\},[n]\}$ is $\{\sigma\}$ for some $\sigma \subset [n]$ (such that $2\leq |\sigma| < n $). Hence, all $\mathcal A_\sigma$ are the upper neighbors $\mathbf 0_{\mathcal L_n}$ since all other families in $\mathcal L_n$ greater than $\mathbf 0_{\mathcal L_n}$ are  also greater than some $\mathcal A_\sigma$. The number of 
all $\sigma$ is $2^n-n-2$. 

\end{proof}

Note that for $n=1$,  $\mathbf 1_{\mathcal L_1}=\{\emptyset,\{1\}\}$ is the only atom, i.e.\ the upper neighbor of $\mathbf 0_{\mathcal L_1}=\{\{1\}\}$.

The statement of the second theorem is also enriched by us with the description of join-irreducible elements taken directly from the original proof by Mapes~\cite{Mapes:2010} except $n=1$. 

\begin{theorem} Each meet irreducible element in $\mathcal L_n$ for $n>2$ has the form $2^{[n]}\setminus \big[\sigma, [n]\setminus {i}\big]$, where $\sigma \subset [n]$ and $2\leq |\sigma| < n $ and $i \in [n]$. The number meet irreducible elements for $n\neq 1$ is $n(2^{n-1}-n)$ and 1 for $n=1$.
\end{theorem}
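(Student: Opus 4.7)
The plan is to prove the two halves of the claim separately—first that every family of the proposed form is meet-irreducible in $\mathcal{L}_n$, and then that every meet-irreducible arises this way—and conclude with the count.

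For sufficiency, fix $\sigma\subsetneq [n]$ with $2\leq |\sigma|<n$ and $i\in [n]\setminus\sigma$, and put $M:=2^{[n]}\setminus [\sigma,[n]\setminus\{i\}]$. I will check that $M$ is a strict atomic closure family: $\emptyset$, the singletons, and $[n]$ avoid the interval because $|\sigma|\geq 2$ and $[n]\not\subseteq [n]\setminus\{i\}$, and intersection-closure follows because $A\cap B\in [\sigma,[n]\setminus\{i\}]$ forces $\sigma\subseteq A$, $\sigma\subseteq B$ and $i$ missing from at least one of $A,B$, placing that one in the interval. Meet-irreducibility then reduces to identifying $M\cup\{\sigma\}$ as the unique upper cover: it is intersection-closed (each new intersection $\sigma\cap B$ for $B\in M$ is a proper subset of $\sigma$, hence outside the interval), and any proper extension $M'\supsetneq M$ contains some $T\in [\sigma,[n]\setminus\{i\}]$, so the identity $T\cap (\sigma\cup\{i\})=\sigma$, using $\sigma\cup\{i\}\in M\subseteq M'$, forces $\sigma\in M'$.

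For necessity, let $M$ be meet-irreducible with $M^c:=2^{[n]}\setminus M$; I will recover $\sigma$ and $i$ as canonical invariants of $M^c$. First, $M^c$ has a unique inclusion-minimum $\sigma$: two distinct minima $\sigma_1,\sigma_2$ would give closed extensions $M\cup\{\sigma_j\}$ (each $\sigma_j\cap B$ with $B\in M$ is a proper subset of $\sigma_j$ and hence outside $M^c$ by minimality) whose intersection in $\mathcal{L}_n$ is $M$, contradicting meet-irreducibility. Next, set $I:=\{j\in [n]\setminus\sigma : \sigma\cup\{j\}\in M\}$; I claim $|I|=1$. The bound $|I|\leq 1$ is from closure, since distinct $i_1,i_2\in I$ would give $(\sigma\cup\{i_1\})\cap (\sigma\cup\{i_2\})=\sigma\in M$ contradicting $\sigma\in M^c$. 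The bound $|I|\geq 1$ is from meet-irreducibility: if $I=\emptyset$, then for any $T=\sigma\cup\{j\}\in M^c$ the closure of $M\cup\{T\}$ is exactly $M\cup\{T\}\cup\{T\cap A:A\in M\}$, and no such $T\cap A$ can equal $\sigma$ (this would force $A=\sigma\cup J$ with nonempty $J\subseteq [n]\setminus T$, eventually reducing to an element of $I=\emptyset$), so the closure is an upper cover distinct from $M\cup\{\sigma\}$.

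Writing $I=\{i\}$, I conclude $M^c=[\sigma,[n]\setminus\{i\}]$. For $\subseteq$, any $T\in M^c$ with $i\in T$ would, by meet-irreducibility, require $\sigma$ in the closure of $M\cup\{T\}$, hence some $T\cap A=\sigma$ with $A=\sigma\cup J\in M$ and $\emptyset\neq J\subseteq [n]\setminus T$; but the same double-intersection argument used for $|I|\leq 1$ shows that any $\sigma\cup J\in M$ must contain $\sigma\cup\{i\}$, forcing $i\in J$ and contradicting $J\subseteq [n]\setminus T$. For $\supseteq$, any $T\in [\sigma,[n]\setminus\{i\}]\cap M$ would give $T\cap (\sigma\cup\{i\})=\sigma\in M$ by closure, contradicting $\sigma\in M^c$. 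I expect this last step—the ``any $\sigma\cup J\in M$ must contain $\sigma\cup\{i\}$'' observation combined with the $|I|=1$ identification—to be the main obstacle, since it is what interlocks closure of $M$ with meet-irreducibility to pin down both endpoints of the interval. Given $M^c=[\sigma,[n]\setminus\{i\}]$, distinctness of the $M_{\sigma,i}$ is immediate ($\sigma=\min M^c$ and $i$ is the coordinate missing from $\max M^c$), and the count reduces to enumerating admissible pairs: for each $i\in [n]$ there are $2^{n-1}-(n-1)-1=2^{n-1}-n$ admissible $\sigma\subseteq [n]\setminus\{i\}$ with $|\sigma|\geq 2$, giving $n(2^{n-1}-n)$ in total. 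The case $n=1$ is direct, since $\mathcal{L}_1$ is a two-element chain whose bottom $\{\{1\}\}$ is its sole meet-irreducible.
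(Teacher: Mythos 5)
Your overall architecture is sound, and it is worth noting that the paper itself offers no argument here beyond citing Mapes (plus the $n=1$ remark), so a correct self-contained proof would be a genuine addition. Your sufficiency direction, the uniqueness of the minimum $\sigma$ of $M^c$, the bound $|I|\leq 1$, and both inclusions in the final identification $M^c=[\sigma,[n]\setminus\{i\}]$ are all correct, as is the count. But there is one genuine gap: the step $|I|\geq 1$. Your parenthetical justification claims that $I=\emptyset$ alone forces ``no $T\cap A$ can equal $\sigma$,'' via $A=\sigma\cup J$ ``eventually reducing to an element of $I=\emptyset$.'' No such reduction exists inside the closure system $M$: nothing guarantees that the intermediate sets $\sigma\cup J'$ between $\sigma\cup J$ and $\sigma\cup\{j'\}$ are closed, so you cannot shrink $J$ by intersecting within $M$. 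Concretely, take $n=5$, $\sigma=\{1,2\}$, and $M=2^{[5]}\setminus\{\{1,2\},\{1,2,3\},\{1,2,4\},\{1,2,5\},\{1,2,3,4\},\{1,2,3,5\}\}$. The only members of $M$ containing $\sigma$ are $\{1,2,4,5\}$ and $[5]$, so no intersection of members of $M$ lands in the deleted family, and $M$ is a T1 closure system with $\sigma$ the minimum of $M^c$ and $I=\emptyset$; yet $T=\{1,2,3\}\in M^c$ and $A=\{1,2,4,5\}\in M$ satisfy $T\cap A=\sigma$, so the closure of $M\cup\{T\}$ \emph{does} contain $\sigma$. (This $M$ is not meet-irreducible---it has the two covers $M\cup\{\sigma\}$ and $M\cup\{\{1,2,4\}\}$---so the theorem is safe, but your claimed implication from $I=\emptyset$ is false as stated, and the contradiction you want does not materialize from it.)

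The step is repairable, but only by re-invoking meet-irreducibility in exactly the witness pattern you already use for the inclusion $M^c\subseteq[\sigma,[n]\setminus\{i\}]$. Since $M\cup\{\sigma\}$ is closed (your minimality argument), it is the unique upper cover of the meet-irreducible $M$, so \emph{every} closed family strictly containing $M$ contains $\sigma$; applied to the closure $M\cup\{T\}\cup\{T\cap A: A\in M\}$, this yields for each $T\in M^c\setminus\{\sigma\}$ a witness $A_T\in M$ with $T\cap A_T=\sigma$. Now choose $J$ inclusion-minimal among nonempty sets with $\sigma\cup J\in M$ (such $J$ exists because $[n]\in M$ and $\sigma\neq[n]$), and pick $j'\in J$. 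If $\sigma\cup\{j'\}\in M$ then $j'\in I$ and you are done; otherwise the witness $A'$ for $T'=\sigma\cup\{j'\}$ satisfies $\sigma\subseteq A'$ and $j'\notin A'$, whence $(\sigma\cup J)\cap A'=\sigma\cup(J\cap A')\in M$ with $J\cap A'\subsetneq J$, which either puts $\sigma$ in $M$ (absurd) or contradicts the minimality of $J$. With this patch the rest of your proof goes through; as a minor point, in the same passage ``the closure is an upper cover'' is loose---the closure of $M\cup\{T\}$ need not cover $M$---but it suffices that a $\sigma$-free closed extension would meet $M\cup\{\sigma\}$ in $M$, contradicting meet-irreducibility.
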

\begin{proof} See Mapes~\cite{Mapes:2010}. For $n=1$ the coatom is $\mathbf 0_{\mathcal L_1}=\{\{1\}\}$.

\end{proof}

\begin{corollary} The standard context for the atomic closure system $\mathcal L_n$ with $n>2$ is given by 

$$\Big(\bigcup\limits_{k=2}^{n-1}\binom{[n]}{k}, \bigcup\limits_{\substack{ i \in [n],  \sigma \subseteq [n] \\ |\sigma|\geq 2}} [\sigma, [n]\setminus \{i\}], \not \in\Big)$$

\end{corollary}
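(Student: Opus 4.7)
The plan is to simply assemble the two preceding theorems (on atoms and on meet-irreducibles of $\mathcal L_n$) with the definition of the standard context. Since Mapes has established that $\mathcal L_n$ is itself atomic, every join-irreducible element of $\mathcal L_n$ is an atom, so $J(\mathcal L_n)$ coincides with the set of atoms. The first theorem identifies the atoms with the families $\mathcal A_\sigma=\{\emptyset,\{1\},\ldots,\{n\},\sigma,[n]\}$ for $\sigma\subseteq[n]$ with $2\leq|\sigma|<n$, which gives the desired object set $\bigcup_{k=2}^{n-1}\binom{[n]}{k}$ under the bijection $\mathcal A_\sigma\leftrightarrow\sigma$.

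Next, by the second theorem, the meet-irreducible elements of $\mathcal L_n$ are precisely the families $m_{(\sigma,i)}=2^{[n]}\setminus[\sigma,[n]\setminus\{i\}]$ with $|\sigma|\geq 2$ and $i\in[n]\setminus\sigma$, so $M(\mathcal L_n)$ is encoded exactly by the intervals $[\sigma,[n]\setminus\{i\}]$ appearing in the corollary's second coordinate.

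The remaining step is to verify that the restriction of $\leq$ (which on $\mathcal L_n$ is just set inclusion of families) agrees with $\not\in$. By definition of the standard context, $\tau$ is incident to $(\sigma,i)$ iff $\mathcal A_\tau\subseteq m_{(\sigma,i)}$. Now the interval $[\sigma,[n]\setminus\{i\}]$ consists only of subsets of $[n]$ of size between $|\sigma|\geq 2$ and $n-1$, so it never contains $\emptyset$, any singleton, or $[n]$. Hence among the members of $\mathcal A_\tau$ the only one that can possibly fail to lie in $m_{(\sigma,i)}$ is $\tau$ itself, and we obtain the equivalence
\[
\mathcal A_\tau\subseteq m_{(\sigma,i)}\iff\tau\notin[\sigma,[n]\setminus\{i\}],
\]
which is precisely the incidence $\not\in$ described in the corollary.

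There is no genuine obstacle here; the potential point of confusion is purely notational, namely that the attribute set $\bigcup_{(i,\sigma)}[\sigma,[n]\setminus\{i\}]$ is to be read as an indexed family (one meet-irreducible per pair $(\sigma,i)$), and that the shorthand $\not\in$ abbreviates ``$\tau$ is not an element of the interval indexing the given meet-irreducible''. Once this bookkeeping is in place, the conclusion is immediate from the two invoked theorems together with the fact that all elements of $\mathcal A_\tau$ other than $\tau$ have size outside $\{2,\ldots,n-1\}$.
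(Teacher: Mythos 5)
Your proof is correct and follows essentially the same route as the paper, which states the corollary as an immediate consequence of the two preceding theorems together with the definition of the standard context $(J(\mathcal L_n), M(\mathcal L_n), \leq)$. Your explicit verification that $\mathcal A_\tau \subseteq m_{(\sigma,i)} \iff \tau \notin [\sigma,[n]\setminus\{i\}]$ (since $\emptyset$, the singletons, and $[n]$ never lie in the interval) and your reading of the attribute set as an indexed family with $i \in [n]\setminus\sigma$ merely spell out bookkeeping the paper leaves implicit, and both are consistent with the $\mathcal L_4$ context shown in the paper's figure.
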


\begin{figure}

\centering

\begin{cxt}
\cxtName{$\not\in$}
\atr{$[ab,U\setminus c]$}
\atr{$[ab,U\setminus d]$}
\atr{$[ac,U\setminus b]$}
\atr{$[ac,U\setminus d]$}
\atr{$[bc,U\setminus a]$}
\atr{$[bc,U\setminus d]$}
\atr{$[abc,U\setminus d]$}
\atr{$[ad,U\setminus b]$}
\atr{$[ad,U\setminus c]$}
\atr{$[bd,U\setminus a]$}
\atr{$[bd,U\setminus c]$}
\atr{$[abd,U\setminus c]$}
\atr{$[cd,U\setminus a]$}
\atr{$[cd,U\setminus b]$}
\atr{$[acd,U\setminus b]$}
\atr{$[bcd,U\setminus a]$}
\obj{..XXXXXXXXXXXXXX}{$ab$}
\obj{XX..XXXXXXXXXXXX}{$ac$}
\obj{XXXX..XXXXXXXXXX}{$bc$}
\obj{X.X.X..XXXXXXXXX}{$abc$}
\obj{XXXXXXX..XXXXXXX}{$ad$}
\obj{XXXXXXXXX..XXXXX}{$bd$}
\obj{.XXXXXXX.X..XXXX}{$abd$}
\obj{XXXXXXXXXXXX..XX}{$cd$}
\obj{XX.XXXX.XXXXX..X}{$acd$}
\obj{XXXX.XXXX.XX.XX.}{$bcd$}
 \end{cxt}   

\caption{The standard context ($10 \times 16$) of the lattice of atomic lattices $\mathcal L_4 $.}\label{fig:scxt4}

\end{figure}

In Fig.~\ref{fig:scxt4}, one can see the standard context ($10 \times 16$) of the lattice of atomic lattices $\mathcal L_4 $ obtained for the base set $U=\{a,b,c,d\}$.

We computed the resulting values of \seqnum{A334254} and \seqnum{A334255} for $n=3,4,5,6$ with our implementation of parallel \textsc{NextClosure} algorithm (originally proposed by Ganter and Reuter~\cite{Ganter:1991}) and thus confirmed the results  of \textsc{\textsc{Atomic AddByOne}}. The total computational time is hard to summarise properly per process, but it took about four days for our approach and five days and 17 hours on a laptop with 12 core Intel i-9 processor in parallel mode (seven days and six hours in sequential mode) for \textsc{NextClosure}.

Another interesting approach for enumeration of atomic lattices on $n$ atoms was implemented by S.\ Mapes in Haskell~\cite{Mapes:2010}; even though its running time for $n=6$ is not reported, it took less than a second for $n=5$. 

If we consider the algorithms and results on the number of Moore families without additional constraints~\seqnum{A193674}, then three notable algorithmic approaches pop up. The first one by Habib and Nourine~\cite{Moorefamilies6} for $n=6$ is based on a bijection between Moore families and ideal color sets of the colored poset based on the sum of $n$ Boolean lattices with $n - 1$ atoms. The second one by Colomb, Irlande, and Raynaud~\cite{Moorefamilies7} for $n=7$ converts the original problem to union-closed sets due to computational efficiency issues, while the third one by Brinkmann and Deklerck~\cite{Brinkmann:2018} uses sophisticated automorphism enumeration techniques to obtain the solution for inequivalent union-closed families in the case $n=7$. The result for $n=5$ was obtained by Higuchi~\cite{Moorefamilies5} using depth-first search. 

An interesting approach by Belohlavek and Vychodil~\cite{Belohlavek:2010} to generate non-isomorphic lattices employs the rejection of candidates to full isomorphism test using the vectors of essential pairs from the corresponding order relation of a lattice. However, this approach is devised for enumeration of lattices with a fixed number of elements.

The direct computation by the \textsc{NextClosure} algorithm is possible up to $n=6$ using the standard context of the lattice of all Moore families as reported by Ganter and Obiedkov~\cite{GanterObiedkov:2017}.

Let us have a look at the ratio of members~\seqnum{A193674} and that of~\seqnum{A334254}.

\begin{table}\label{tbl:ratio}
\centering 
\begin{tabular}{|c|c|c|c|c|c|c|c|}
\hline
$n$ & 0 & 1 & 2 & 3 & 4 & 5 & 6\\
\hline
\seqnum{A193674}, $a(n)$& 1 & 2 & 7 & 61 & 2 480 & 1 385 552 & 75 973 751 474\\

\seqnum{A334254}, $b(n)$ & 1 & 2 & 1 & 8 & 545 & 702 525 & 66 096 965 307\\

\hline
${a(n)}/{b(n)}$ & 1 & 1 & $\approx$ 0.14 & $\approx$0.13 & $\approx$0.22 & $\approx$0.51 & $\approx$0.87\\
\hline

\end{tabular}\caption{Ratio between $n$th members of sequences \seqnum{A193674} and \seqnum{A334254}.}
\end{table}

From Table~\ref{tbl:ratio}, one can see that after reaching minimum 8/61 at $n=3$, the ratio $a(n)/b(n)$ rises up to $\approx$0.87 at $n=6$, which may be a good starting point for analysing asymptotic behaviour of the latter series.

At the moment we propose the following theorem on a weak upper bound for the number of atomic closure systems.

\begin{theorem}\label{thm:upperbound} Let $\mathcal L_n$ and $\mathcal M_n$ be the lattices of all atomic Moore families and all Moore families on a set $[n]$ ($n>1$),  respectively, then

$$|\mathcal L_n| \leq |\mathcal M_n| - 2^n - n \mbox{ .}$$ 

\end{theorem}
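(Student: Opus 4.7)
The plan is to observe that every atomic Moore family is a Moore family, so $\mathcal L_n \subseteq \mathcal M_n$, and therefore it suffices to exhibit $2^n + n$ explicit Moore families on $[n]$ that fail the T1 axiom (i.e., are non-atomic). Then $|\mathcal M_n| - |\mathcal L_n|$ is at least this count, which rearranges to the desired inequality.

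First I would record a cardinality lower bound for atomic families: by Proposition~\ref{prop:strictcl}, for $n>1$ every atomic Moore family must contain $\emptyset$, the $n$ singletons $\{1\},\dots,\{n\}$, and $[n]$, hence at least $n+2 \geq 4$ distinct members. In particular, any Moore family on $[n]$ of cardinality at most $3$ is automatically non-atomic, which will let me produce non-atomic examples cheaply without worrying about accidentally landing back in $\mathcal L_n$.

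Next I would construct three pairwise disjoint batches of small non-atomic Moore families, organized by cardinality so that distinctness across batches is automatic:
\begin{enumerate}
\item The singleton family $\{[n]\}$ itself, contributing $1$.
\item The two-element families $\{S,[n]\}$ for each $S \subsetneq [n]$ (including $S=\emptyset$); closure under intersection holds because $S \cap [n] = S$ and $[n] \cap [n] = [n]$. This yields $2^n - 1$ families.
\item The three-element families $\{\emptyset,\{i\},[n]\}$ for each $i \in [n]$; the only nontrivial intersections are $\emptyset \cap \{i\} = \emptyset$, $\emptyset \cap [n] = \emptyset$, and $\{i\} \cap [n] = \{i\}$, all in the family. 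This yields $n$ families.
\end{enumerate}
Within each batch the members are distinguished by the parameter ($S$ or $i$), and across batches the cardinalities $1$, $2$, $3$ are distinct, so all $1 + (2^n - 1) + n = 2^n + n$ families are pairwise distinct. Each of them has at most $3$ elements and so, by the preliminary observation, lies in $\mathcal M_n \setminus \mathcal L_n$.

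There is no real obstacle here; the only step requiring any thought is making the three batches provably disjoint, which I handle by stratifying according to cardinality rather than by set-theoretic content. The bound is admittedly crude (matching the paper's own description ``weak''), and a natural follow-up direction — not needed for this theorem — would be to enlarge batch (iii) to arbitrary non-atomic three-element families $\{\emptyset,S,[n]\}$ with $|S|\geq 2$, or to include families $\{\emptyset,S,T,[n]\}$ with $S\cap T=\emptyset$, tightening the inequality while preserving the disjointness argument.
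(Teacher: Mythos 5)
Your proof is correct, and it shares the paper's overall strategy (exhibit $2^n+n$ pairwise distinct non-atomic Moore families and subtract), but your witness set is genuinely different from the paper's. The paper draws all its witnesses from a single structured object: the principal order ideal of $\mathbf 0_{\mathcal L_n}=\{\emptyset,\{1\},\ldots,\{n\},[n]\}$ inside $\mathcal M_n$, i.e., the Moore subfamilies $\{[n]\}$, $\{\emptyset,[n]\}$, the $n$ families $\{\{i\},[n]\}$, and the $2^n-2$ families $\{\emptyset,[n]\}\cup S$ with $S$ a nonempty proper subset of the singletons --- exactly everything strictly below the least atomic family, $2^n+n$ in all. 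Your collection only partly overlaps with this ideal: your batch (2) admits $\{S,[n]\}$ for \emph{every} proper $S\subsetneq[n]$ (e.g., $\{\{1,2\},[n]\}$, which is not a subfamily of $\mathbf 0_{\mathcal L_n}$), while you omit the families $\{\emptyset,[n]\}\cup S$ with two or more singletons that the paper counts. What each buys: in the paper the constant $2^n+n$ has a structural reading --- it is the size of the ideal $(\mathbf 0_{\mathcal L_n}]$ in $\mathcal M_n$ minus its top, so the bound measures how the bottom of $\mathcal L_n$ sits inside $\mathcal M_n$ --- whereas your stratification by cardinality $1,2,3$ makes pairwise distinctness immediate and reduces non-membership in $\mathcal L_n$ to the one-line observation, via Proposition~\ref{prop:strictcl}, that any T1 family on $[n]$ with $n>1$ has at least $n+2\geq 4$ members (the paper's count, by contrast, silently excludes the atomic top and the already-counted $\{\emptyset,[n]\}$ from its ``$2^n-2$''). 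Your construction is also more obviously extensible: your two-element families with $|S|\geq 2$ are disjoint from the paper's ideal, so merging the two witness sets already yields $(2^n+n)+(2^n-n-2)=2^{n+1}-2$ non-atomic families and hence the sharper bound $|\mathcal L_n|\leq |\mathcal M_n|-2^{n+1}+2$, exactly in the spirit of the follow-up you sketch at the end.
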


\begin{proof} Let us count the size of principal order ideal of $\mathbf 0_{\mathcal L_n}$ taken in $\mathcal M_n$, i.e., at first,  to count all Moore subfamilies of $\{\emptyset, \{1\}, \ldots, \{n\}, [n]\}$. There are $2^n-2$ subsets of $\{\{1\}, \ldots, \{n\}\}$ to be added to $\{\emptyset, [n]\}$ to form a valid Moore family. Every single element set $\{i\} \subset [n]$ gives rise to one out $n$ Moore families in the form $\{\{i\}, [n]\}$. Two remaining Moore families are $\mathbf 0_{\mathcal M_n}=\{[n]\}$ and $\{\emptyset, [n]\}$ per se.
\end{proof}

We conclude with visiting another interesting venue, namely, extremal lattice theory, where questions ``Why finite lattices described by standard contexts are large?'' are studied based on the notion of VC-dimension~\cite{AlbanoChornomaz:2017}. As it was shown by Albano and Chornomaz~\cite{AlbanoChornomaz:2017}, the reason to have a huge number of elements of a lattice is the presence in its standard contexts of the so-called contranominal scales, i.e.\ induced subcontexts (subrelations) of the form $\mathbb N^c(k):=(\{1,\ldots,k\},\{1,\ldots,k\},\neq)$ (e.g., the rightmost binary relation in Fig.~\ref{fig:scales} is $\mathbb N^c(3)$).

For example, the closure systems generated by a contranominal scale on $n$  elements taken as a standard context has $2^n$ closed sets.

 The \textit{breadth} of a complete lattice is the number of atoms of the largest Boolean lattice that the lattice contains as a suborder, i.e.\ the size of the largest contranominal scale subrelation of its standard context (valid for all finite contexts)~as noted by Ganter~\cite{Ganter:2020}.

\begin{theorem}[Albano and Chornomaz~\cite{AlbanoChornomaz:2017}]\label{thm:ACbound}
Let $\mathbb K:= ([n],U,I\subseteq [n]\times U)$ be an $\mathbb N^c(k)$-free formal context, then
$$|\mathbf L(\mathbb K)| \leq \sum\limits_{i=0}^{k-1} \binom{n}{i}.$$ 
\end{theorem}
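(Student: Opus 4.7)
The plan is to reduce this to the Sauer--Shelah lemma, which gives exactly the same bound $\sum_{i=0}^{k-1}\binom{n}{i}$ for set families that do not shatter any $k$-subset of the ground set. The bridge between the two settings is the following: I would identify $\mathbf L(\mathbb K)$ with the family $\mathcal E$ of extents of $\mathbb K$ (closed subsets of $[n]$ under the composition of the two polarity operators $(\cdot)^{I}$), so that $|\mathbf L(\mathbb K)|=|\mathcal E|\subseteq 2^{[n]}$. It then suffices to prove the structural lemma: if $\mathbb K$ is $\mathbb N^c(k)$-free, then no $k$-subset of $[n]$ is shattered by $\mathcal E$.

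To establish that lemma, I would argue by contrapositive. Suppose some $X=\{g_1,\ldots,g_k\}\subseteq[n]$ is shattered by $\mathcal E$. Then for each $j\in\{1,\ldots,k\}$ there is an extent $A_j\in\mathcal E$ with $A_j\cap X = X\setminus\{g_j\}$. Let $B_j:=A_j^{I}$ be the corresponding intent. Because $g_j\notin A_j=B_j^{I}$, there exists some attribute $m_j\in B_j$ with $g_j\,\overline{I}\,m_j$. For every $i\neq j$ we have $g_i\in A_j$, hence $g_i\,I\,m$ for all $m\in B_j$; in particular $g_i\,I\,m_j$. Thus the subrelation induced on $\{g_1,\ldots,g_k\}\times\{m_1,\ldots,m_k\}$ satisfies $g_i\,I\,m_j\iff i\neq j$. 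A quick check also shows that the $m_j$ are pairwise distinct: if $m_j=m_{j'}$ with $j\neq j'$, the conditions $g_j\,I\,m_{j'}$ and $g_j\,\overline{I}\,m_j$ collide. This exhibits an induced $\mathbb N^c(k)$ inside $\mathbb K$, contradicting the hypothesis.

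With the lemma in hand, the proof closes by invoking Sauer--Shelah on $\mathcal E\subseteq 2^{[n]}$: since $\mathcal E$ shatters no $k$-subset, its VC-dimension is at most $k-1$, and hence $|\mathcal E|\leq \sum_{i=0}^{k-1}\binom{n}{i}$, which is the claim. The main obstacle I anticipate is exactly the selection step where I pick a witness attribute $m_j\in B_j\setminus\{m\in U:g_j\,I\,m\}$ and need to argue that the resulting witnesses are distinct and truly form an induced contranominal scale; once that is verified, the connection to Sauer--Shelah is routine. A secondary subtlety is the choice of which side (objects or attributes) to play the role of the ground set for Sauer--Shelah: by duality one may equally bound $|\mathbf L(\mathbb K)|$ by $\sum_{i=0}^{k-1}\binom{|U|}{i}$, so if $|U|<n$ the same argument gives the sharper dual bound, but the stated theorem is the object-side version and is what one obtains directly from the construction above.
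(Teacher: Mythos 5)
Your proof is correct. Note first that the paper itself offers no proof of this statement: it is imported verbatim from Albano and Chornomaz~\cite{AlbanoChornomaz:2017} as a cited theorem, so there is no internal argument to compare against. Your reconstruction is sound and is, in substance, the argument one expects behind the cited result (whose very title advertises VC-dimension): identify $|\mathbf L(\mathbb K)|$ with the number of extents $\mathcal E\subseteq 2^{[n]}$, show that $\mathbb N^c(k)$-freeness forces $\mathcal E$ to shatter no $k$-subset of $[n]$, and invoke Sauer--Shelah. The key lemma is verified cleanly: from the shattered set $X=\{g_1,\ldots,g_k\}$ you only need the $k$ co-singleton traces $X\setminus\{g_j\}$, the witness attributes $m_j\in B_j$ with $g_j\,\overline{I}\,m_j$ exist precisely because $A_j=B_j^{I}$, the incidences $g_i\,I\,m_j$ for $i\neq j$ follow from $g_i\in A_j$, and your distinctness check for the $m_j$ is exactly what is needed for the subcontext induced on $\{g_1,\ldots,g_k\}\times\{m_1,\ldots,m_k\}$ to be a genuine $\mathbb N^c(k)$, contradicting freeness. (In fact the converse also holds --- an induced $\mathbb N^c(k)$ makes $\mathcal E$ shatter its object set, so freeness is \emph{equivalent} to VC-dimension at most $k-1$, though you do not need this direction.) Your closing remark about the dual attribute-side bound $\sum_{i=0}^{k-1}\binom{|U|}{i}$ is also correct, by applying the same argument to intents. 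The only cosmetic slip is writing $|\mathbf L(\mathbb K)|=|\mathcal E|\subseteq 2^{[n]}$, conflating a cardinality with the family itself; it does not affect the argument.
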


In what follows, we denote the sum from Theorem~\ref{thm:ACbound} by $f_{AC}(n,k)$.

By using this theorem and our knowledge on the number of meet-irreducible element for the lattice of all atomic Moore families and Moore families, respectively, we obtain two series (Table~\ref{tbl:breadth}) for the estimated breadth of those two lattices for known $|\mathcal L_n|$ and $|\mathcal M_n|$. Note that $|J(\mathcal M_n)|=2^n-1$ follows from Definition 17 and Proposition 18 in Caspard and Monjardet~\cite{Caspard:2003}.

\begin{table}
\centering 
\begin{tabular}{|c|c|c|c|c|c|}
\hline
$n$ &  3 & 4 & 5 & 6 & 7\\
\hline
$|J(\mathcal L_n)|$ & 3 & 10 & 25 & 56 & 119\\
$|J(\mathcal M_n)| $ & 7 & 15 & 31 & 63 & 127\\
Estimated breadth of $\mathcal L_n$& 3 & 5 & 7 & 11 & ?\\
Estimated breadth of $\mathcal M_n$  & 3 & 5 & 7 & 10 & 16\\
\hline
Breadth of $\mathcal L_n$ & 3 & 7 & 13 &  $\geq$ 18 & $\geq$ 25\\
\hline

\end{tabular}\caption{Estimated breadths of lattices $\mathcal L_n$ and $\mathcal M_n$ for $n=3$ up to 6 and 7, respectively.}\label{tbl:breadth}
\end{table}

For example, for $n=6$ we know $|J(\mathcal L_6)|=56$. For $k=11$ we get $f_{AC}(56,11)=\sum\limits_{i=0}^{10} \binom{56}{i}=44 872 116 214 < | \mathcal L_6|=66 096 965 307$, while for $k=12$ we have  $f_{AC}(n,k)(56,12)=193 774 331 494 > |\mathcal L_6|$. So, $\mathcal L_6$ contains a $2048$-elements Boolean lattice and  the breadth of $\mathcal L_6$ is at least 11.

The actual breadth of $\mathcal L_3$ is indeed 3 since the standard context of  $\mathcal L_3$ coincides with $\mathbb N^c(3)$, while the actual breadth of $\mathcal L_4$ is 7  since there are 80 embedded $\mathbb N^c(7)$ and that of  $\mathcal L_5$ is 13  since there are 10980 embedded $\mathbb N^c(13)$ (the last line in Table~\ref{tbl:breadth} is found by a combinatorial search).

\section{Acknowledgments}

We would like to thank Bernhard Ganter, Sergei Kuznetsov, and Sergei Obiedkov for introducing us to the main algorithms for closed sets enumeration.  We also thank Florent Domenach for discussions on lattice cryptomorphisms, Radim Belohlavek, Leonard Kwuida, Bogdan Chornomaz, Lhouari Nourine, Dmitry Piontkowski, and Marcin Wolski for several fruitful discussions on related topics, and last but not least we thank Alexei Neznanov for the first familiarity with OEIS and graph automorphisms enumeration.

The paper was prepared within the framework of the HSE University Basic Research Program and RFBR (Russian Foundation for Basic Research) according to the research project No 19-29-01151. This research was supported in part through computational resources of HPC facilities at HSE University. We would also to thank Gennady Khvorykh for managing our access to the computational server at the Institute of Molecular Genetics of National Research Centre ``Kurchatov Institute''.

\bigskip
\hrule
\bigskip

\noindent 2010 {\it Mathematics Subject Classification}: Primary
06-04; Secondary 06A07, 06A15.

\noindent \emph{Keywords:} enumeration, Moore set, atomic lattice.

\bigskip
\hrule
\bigskip

\noindent (Concerned with sequences
\seqnum{A334254},
\seqnum{A334255},
\seqnum{A235604},
\seqnum{A355517},
\seqnum{A193674} and
\seqnum{A305233}.)

\end{document}